\theoremstyle{plain}
\newtheorem{thm}{\protect\theoremname}
\theoremstyle{definition}
\newtheorem{defn}[thm]{\protect\definitionname}
\theoremstyle{remark}
\newtheorem{rem}[thm]{\protect\remarkname}
\theoremstyle{plain}
\newtheorem{lem}[thm]{\protect\lemmaname}
\theoremstyle{plain}
\newtheorem{prop}[thm]{\protect\propositionname}
\def\algbackskip{\hskip-\ALG@thistlm}
\DeclareMathOperator{\E}{{\mathds E}}	% Expectation w/o limits; \mathbb, \mathbs
\DeclareMathOperator{\SD}{\mathsf{SD}}
\DeclareMathOperator{\nd}{\mathsf{d\kern-.15ex I}}  %nested distance
\providecommand{\definitionname}{Definition}
\providecommand{\lemmaname}{Lemma}
\providecommand{\propositionname}{Proposition}
\providecommand{\remarkname}{Remark}
\providecommand{\theoremname}{Theorem}
\begin{document}
\title{\textbf{Quantification of Risk in Classical Models of Finance}}
\author{Alois Pichler\thanks{Funded by Deutsche Forschungsgemeinschaft (DFG, German Research Foundation)
\textendash{} Project-ID~416228727~\textendash{} SFB~1410} \and  Ruben Schlotter\thanks{Both authors: Technische Universität Chemnitz, 09126 Chemnitz, Germany.
Contact: \protect\href{mailto:ruben.schlotter@math.tu-chemnitz.de}{ruben.schlotter@math.tu-chemnitz.de}}}
\maketitle
\begin{abstract}
This paper enhances the pricing of derivatives as well as optimal
control problems to a level comprising risk. We employ nested risk
measures to quantify risk, investigate the limiting behavior of nested
risk measures within the classical models in finance and characterize
existence of the risk-averse limit. As a result we demonstrate that
the nested limit is unique, irrespective of the initially chosen risk
measure. Within the classical models risk aversion gives rise to a
stream of risk premiums, comparable to dividend payments. In this
context we connect coherent risk measures with the Sharpe ratio from
modern portfolio theory and extract the Z-spread\textemdash a widely
accepted quantity in economics to hedge risk.

The results for European option pricing are then extended to risk-averse
American options, where we study the impact of risk on the price as
well as the optimal time to exercise the option.

We also extend Merton's optimal consumption problem to the risk-averse
setting.

\bigskip{}

\textbf{Keywords:} Risk measures, Optimal control, Black\textendash Scholes

\textbf{Classification:} 90C15, 60B05, 62P05
\end{abstract}

\section{\label{sec:Introduction}Introduction}

This paper studies discrete classical models in finance under risk
aversion and their behavior in a high-frequency setting. Using nested
risk measures we first study risk aversion in the multiperiod model.

We develop risk aversion in a discrete time and discrete space setting
and find an important consistency property of nested risk measures.
This consistency property, termed \emph{divisibility}, is crucial
in high-frequency trading environments. For this, our study of risk-averse
models extends to continuous time processes as well. This very property
allows consistent decision making, i.e., decisions, which are independent
of individually chosen discretizations or trading frequencies. Our
results also give rise to a generalized Black\textendash Scholes framework,
which incorporates risk aversion in addition.

\citet{Riedel2004-R} has introduced risk measures in a dynamic setting.
Later, \citet{Cheridito2004} study risk measures for bounded càdlàg
processes and \citet{cher2006-R} also discuss risk measures in a
discrete time setting. \citet{Ruszczynski-R} introduce nested risk
measures, for which \citet{Philpott2013} provide an economic interpretation
as an insurance premium on a rolling horizon basis. For a recent discussion
on risk measures and dynamic optimization we refer to \citet{DeLara2015-R}.
Applications can be found in \citet{PhilpottMatos} or \citet{Maggioni2012},
e.g., where stochastic dual dynamic programming methods are addressed,
see also \citet{Guigues2012}.

Divisibility is an indispensable prerequisite in defining an infinitesimal
generator based on discretizations. This generator, called \emph{risk
generator}, constitutes the risk-averse assessment of the dynamics
of the underlying stochastic process. Using the risk generator we
characterize the existence of the risk-averse limit of discrete pricing
models. For coherent risk measures and Itô diffusion processes the
risk generator constitutes a nonlinear operator, comparable to the
classical infinitesimal generator but with an additional term, accounting
for risk, which takes the form 
\[
s_{\rho}\,\left|\sigma\,\partial_{x}\,(\cdot)\right|.
\]
Here, $s_{\rho}$ is a scalar expressing the degree of risk aversion
and $\sigma$ is the volatility of the diffusion process describing
the asset price. It turns out that the risk generator only depends
on the risk measure through the coefficient of risk aversion $s_{\rho}$.
This surprising feature has important conceptual implications, as
evaluating a risk measure is often an optimization problem itself.
As well we derive that the scaling quantity $s_{\rho}$ allows the
economic interpretation of a Sharpe ratio and $s_{\rho}\cdot\sigma$
is the Z-spread.

Using the risk generator we derive a nonlinear Black\textendash Scholes
equation, which we relate to the Black\textendash Scholes formula
for dividend paying stocks proposed by \citet{Merton1973}. Moreover
we relate risk-averse pricing models to foreign exchange options models
as in \citet{Garman1983}. Nonlinear Black\textendash Scholes equations
have been discussed previously in \citet{Barles1998} and \citet{Sevcovic2016}
in the context of modeling transaction costs. There, the nonlinearity
is in the second derivative. In contrast, risk aversion leads to drift
uncertainty and causes nonlinearity in the first derivative.

Very different to our approach, \citet{Stadje2010} studies the convergence
properties of discretizations of dynamic risk measures based on backwards
stochastic differential equations introduced in \citet{Pardoux1990}
(see also \citet{Delong2013} for an overview). \citet{RuszczynskiHJB-R}
then derive risk-averse Hamilton\textendash Jacobi\textendash Bellmann
equations based on these backwards stochastic differential equations. 

For coherent risk measures we derive an explicit solution for the
European option pricing problem. We show that risk aversion expressed
via coherent risk measures can be interpreted either as an extra dividend
payment or capital injection. Furthermore we relate risk-aversion
to a change of currency as in the foreign exchange option model. The
amount of the dividend payment or, equivalently, the interest rate
in the risk-averse currency, is given by a multiple of the Sharpe
ratio and the volatility of the underlying stock. This ratio, which
expresses risk aversion, arises for any coherent risk measure and
does not depend on a specific market model such as the Black\textendash Scholes
model. However, as our focus is on classical models, we restrict ourselves
to Itô diffusion processes.

Using a free boundary formulation we extend the analysis from European
to American option pricing. For the Black\textendash Scholes option
pricing of European and American options, risk-aversion naturally
leads to a bid-ask spread, which we quantify explicitly.

Similarly we extend the Merton optimal consumption problem to a risk-averse
setting. We elaborate on the optimal controls and show that risk-aversion
reduces the investment in risky assets and increases consumption.
We observe the same pattern as for European and American options,
that is, risk-aversion corrects the drift of the underlying market
model. For all classical models discussed here, the risk-averse assessment
still allows explicit pricing and control formulae. 

\section{Preliminaries on risk measures}

Recall the definition of \emph{law invariant, coherent risk measures}
$\rho\colon L\to\mathbb{R}$ defined on some vector space $L$ of
$\mathbb{R}$\nobreakdash-valued random variables first. They satisfy
the following axioms introduced by \citet{Artzner1999-R}.
\begin{enumerate}[label=A\arabic*., ref=A\arabic*, noitemsep]
\item \label{enu:Monotonicity}Monotonicity: $\rho(Y)\le\rho(Y^{\prime})$,
provided that $Y\le Y^{\prime}$ almost surely;
\item \label{enu:equivariance}Translation equivariance: $\rho(Y+c)=\rho(Y)+c$
for $c\in\mathbb{R}$;
\item \label{enu:Convexity} Subadditivity: $\rho\big(Y+Y^{\prime}\big)\le\rho(Y)+\rho(Y^{\prime})$;
\item \label{enu:Homogeneous}Positive homogeneity: $\rho(\lambda\,Y)=\lambda\,\rho(Y)$
for $\lambda\ge0$;
\end{enumerate}
\begin{enumerate}[resume, label=A\arabic*., ref=A\arabic*, noitemsep]
\item \label{enu:LawInvariance}Law invariance: $\rho(Y)=\rho(Y^{\prime})$,
whenever $Y$ and $Y^{\prime}$ have the same law, i.e., $P(Y\le y)=P(Y^{\prime}\le y)$
for all $y\in\mathbb{R}$.
\end{enumerate}
The expectation ($\rho(Y)=\E Y$) is also a law invariant coherent
risk measure, expressing risk-neutral behavior. In contrast to the
risk-neutral setting, the risk-averse setting distinguishes between
$\rho(Y)$ and $-\rho(-Y)$. As a result of\footnote{The inequality $0=\rho(Y-Y)\leq\rho(Y)+\rho(-Y)$ implies that $-\rho(-Y)\leq\rho(Y)$.}
\[
-\rho(-Y)\le\rho(Y)
\]
we will later identify $\rho(Y)$ with the seller's ask price and
$-\rho(-Y)$ with the buyer's bid price in the option pricing problems
discussed below.

\subsection{Nested risk measures}

We consider a filtered probability space $\left(\Omega,\mathcal{F},(\mathcal{F}_{t})_{t\in\mathcal{T}},P\right)$
and associate $t\in\mathcal{T}$ with \emph{stage} or\emph{ time}.
For the discussion of risk in a dynamic setting we introduce nested
risk measures corresponding to the evolution of risk over time. Nested
risk measures are compositions of conditional risk measures (cf.\ \citet{PflugRomisch2007-R}).

Recall that a coherent risk measures $\rho\colon L\to\mathbb{R}$
can be represented by 
\begin{equation}
\rho(Y)=\sup_{Q\in\mathcal{Q}}\,\E_{Q}Y,\label{eq:dualRep}
\end{equation}
where $\mathcal{Q}$ is a convex set of probability measures absolutely
continuous with respect to $P$ (cf.\ also \citet{Delbaen02coherentrisk-R}).
We assume throughout that $\rho\colon L^{p}\to\mathbb{R}$ for some
fixed $p\geq1$. Following \citet{Ruszczynski-R}, we then introduce
conditional versions $\rho^{t}\colon L^{p}\to L^{p}(\Omega,\mathcal{F}_{t},P)$
of the risk measure $\rho$ conditioned on the sigma algebra $\mathcal{F}_{t}$.
Note that the conditional risk measures $\rho^{t}$ satisfy conditional
versions of the Axioms~\ref{enu:Monotonicity}\textendash \ref{enu:LawInvariance}
above. For the construction of $\rho^{t}$ and further details we
refer the interested reader also to \citet[Section 6.8.2]{Shapiro2014}.

We now introduce nested risk measures in discrete time.
\begin{defn}[Nested risk measures]
\label{def:nRM} The \emph{nested risk measure} for the partition
$\mathcal{P}=\left(t_{0},t_{1},\dots,t_{n}\right)$ at times $t_{0}<\ldots<t_{n}$
is 
\begin{equation}
\rho^{\mathcal{P}}(Y)\coloneqq\rho^{t_{0}}\left(\rho^{t_{1}}\left(\ldots\rho^{t_{n}}(Y)\ldots\right)\right),\label{eq:17}
\end{equation}
where $(\rho^{t_{i}})_{i=0}^{n}$ is a family of conditional risk
measures.
\end{defn}

Similar as above, we distinguish the buyer's and seller's perspective
and consider the bid price
\[
-\rho^{\mathcal{P}}(-Y)=-\rho^{t_{0}}\left(\rho^{t_{1}}\left(\ldots\rho^{t_{n}}(-Y)\ldots\right)\right),
\]
as well as the ask price in~\eqref{eq:17}.

\subsection{Nested risk measures for discrete processes}

To elaborate key properties of nested risk measures as defined in~\eqref{eq:17}
we discuss the binomial model, well-known from finance, by employing
the mean semi-deviation, a coherent risk measure satisfying all Axioms~\ref{enu:Monotonicity}\textendash \ref{enu:LawInvariance}
above. Particularly, we expose that only specific choices of parameters
can lead to consistent models. 
\begin{defn}[Semi-deviation]
\label{def:mSD}The mean semi-deviation risk measure of order $p\geq1$
and $Y\in L^{p}$ at level $\beta\in[0,1]$ is 
\[
\SD_{p,\beta}(Y):=\E Y+\beta\left\Vert \left(Y-\E Y\right)_{+}\right\Vert _{p}.
\]
\end{defn}

\paragraph{The binomial model.}

Consider the stochastic process $S=(S_{0},\dots,S_{T})$ with initial
state $S_{0}$ and Markovian transitions with 
\begin{equation}
P\left(S_{t+\Delta t}=S_{t}\cdot e^{\pm\sigma\sqrt{\Delta t}}\right)=p_{\pm},\label{eq:Binomial}
\end{equation}
where 
\[
p_{+}:=p:=\frac{e^{r\Delta t}-e^{-\sigma\sqrt{\Delta t}}}{e^{\sigma\sqrt{\Delta t}}-e^{-\sigma\sqrt{\Delta t}}}\text{ and }p_{-}:=1-p_{+}.
\]
It holds that $\E S_{t+\Delta t}=pS_{t}e^{\sigma\sqrt{\Delta t}}+(1-p)S_{t}e^{-\sigma\sqrt{\Delta t}}=S_{t}e^{r\Delta t}$.
In stochastic finance, the process~$S$ models the evolution of a
stock over time with respect to the risk-neutral risk measure, where
$r$ is the risk free interest rate. 

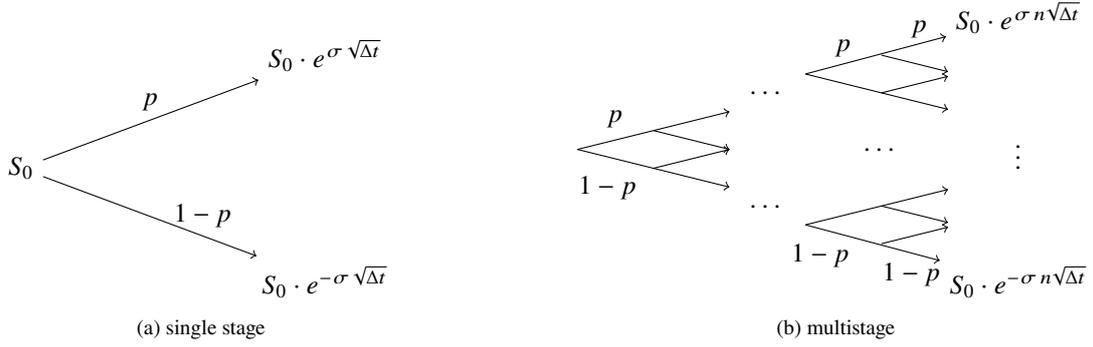
\begin{figure}[H]
\centering{}\subfloat[\label{fig:single}single stage]{\begin{centering}
\begin{tikzpicture}[yscale=0.5]	%	Scenario tree process, Figure 1
\node (o0)  at (0,3)  {$S_0$};
\node (o1)  at (4, 6) {$S_{0}\cdot e^{\sigma\,\sqrt{\Delta t}}$};
\node (o2)  at (4, 0) {$S_{0}\cdot e^{-\sigma\,\sqrt{\Delta t}}$};
\draw[->, thin] (o0) -- node[above]{$p$}(o1);
\draw[->, thin] (o0) -- node[above, near end]{$1-p$}(o2);
\end{tikzpicture}
\par\end{centering}
}\hfill{}\subfloat[\label{fig:multi}multistage]{\begin{centering}
\begin{tikzpicture}[yscale=0.5]	%	Scenario tree process, Figure 1
\node (on)  at (5.8,6.5)  {$S_0 \cdot e^{\sigma\,n \sqrt{\Delta t}}$};
\node (on1)  at (5,5) {};				%$S_{n-1}$
\node (on2)  at (5,4) {};				%$S_{n-2}$
\node (o2)  at (5,2) {};				 %$S_2$
\node (o1)  at (5,1) {};				%$S_1$
\node (o0)  at (5.8,-0.5) {$S_0 \cdot e^{-\sigma\,n \sqrt{\Delta t}}$};				%$S_0$

%	erstes Dreieck
\draw[->, thin] (0,3.0)--node[above]{}(2,4.0);
\node at (0.5,3.8) {$p$};
\draw[->, thin] (1,3.5)--(2,3.0);
\draw[->, thin] (0,3.0)--node[below]{}(2,2);
\node at (0.4,2.0) {$1-p$};
\draw[->, thin] (1,2.5)--(2,3.0);

%	oberes Dreieck
\node       at (2.5,4.5) {$\dots$};
\draw[->, thin] (3,5)--node[above]{}(on);
\node at (3.5,5.7) {$p$};
\node at (4.5,6.2) {$p$};
\draw[->, thin] (4,5.5)--(on1);
\draw[->, thin] (3,5)--(on2);
\draw[->, thin] (4,4.5)--(on1);

\node       at (4,3) {$\dots$};
\node       at (5.8,3) {$\vdots$};

%	unteres Dreieck
\node       at (2.5,1.5) {$\dots$};
\draw[->, thin] (3,1)--(o2);
\draw[->, thin] (4,1.5)--(o1);
\draw[->, thin] (3,1)--node[below]{}(o0);
\node at (3.2,0.2) {$1-p$};
\node at (4.4,-0.3) {$1-p$};
\draw[->, thin] (4,0.5)--(o1);
\end{tikzpicture}
\par\end{centering}
}\caption{\label{fig:prices-1}Binomial option pricing model}
\end{figure}

We can evaluate various classical coherent risk measures for this
binomial model explicitly. The following remark addresses the mean
semi-deviation for the one-period binomial model (cf.\ Figure~\ref{fig:single})
as well as the nested mean semi-deviation for the $n$-period model
in (Figure~\ref{fig:multi}).
\begin{rem}[The mean semi-deviation for the binomial model]
\label{rem:risk-binom}Consider the single stage setting in Figure~\ref{fig:single}
first. The risk-averse bid price for the stock $S_{\Delta t}$ employing
the mean semi-deviation $\SD_{1,\beta}$ of order $1$ with risk level
$\beta$ in the binomial model is
\begin{align*}
-\SD_{1,\beta}(-S_{\Delta t}) & =\E S_{\Delta t}-\beta\E\left(-S_{\Delta t}+\E S_{\Delta t}\right)_{+}\\
 & =pS_{0}e^{\sigma\sqrt{\Delta t}}+(1-p)S_{0}e^{-\sigma\sqrt{\Delta t}}-\beta\,p(1-p)\left(S_{0}e^{\sigma\sqrt{\Delta t}}-S_{0}e^{-\sigma\sqrt{\Delta t}}\right).
\end{align*}
Involving the new probability weights
\begin{align}
\widetilde{p} & :=p\big(1-\beta(1-p)\big)\label{eq:15}
\end{align}
we find
\[
-\SD_{1,\beta}(-S_{\Delta t})=\widetilde{\E}S_{\Delta t}.
\]

We now repeat this observation in $n$ stages and consider an $n$-period
binomial model with step size $\Delta t\coloneqq\frac{T}{n}$, i.e.,
$\mathcal{P}=\left(0,\Delta t,2\Delta t,\dots,T\right)$, cf.\ Figure~\ref{fig:multi}.
The nested mean semi-deviation for the vector of constant risk levels
$\beta=(\widetilde{\beta},\dots,\widetilde{\beta})>0$ satisfies
\[
-\SD_{1,\beta}^{\mathcal{P}}(-S_{T})=-\SD_{1,\widetilde{\beta}}\left(\dots\SD_{1,\widetilde{\beta}}\left(-S_{T}\right)\dots\right)=\widetilde{\E}S_{T},
\]
where the last expectation is with respect to the probability measure
\[
\widetilde{P}\left(S_{T}=S_{0}e^{\sigma\left(2k\sqrt{\Delta t}-n\sqrt{\Delta t}\right)}\right)=\binom{n}{k}\widetilde{p}^{k}(1-\widetilde{p})^{n-k},\qquad k=0,\dots,n.
\]
The limit 
\begin{equation}
\frac{1}{\sqrt{n\,\widetilde{p}(1-\widetilde{p})}}\left(\frac{\frac{1}{\sigma}\log\frac{S_{T}}{S_{0}}+n\sqrt{\Delta t}}{2\sqrt{\Delta t}}-n\,\widetilde{p}\right)\label{eq:18}
\end{equation}
is non-degenerate for $n\to\infty$, provided that $\widetilde{p}\to\frac{1}{2}$.
Based on the central limit theorem, the limit~\eqref{eq:18} then
follows a standard normal distribution.

Hence, specific choices of the parameter $\beta$ in~\eqref{eq:15}
depending on the discretization have to be considered. To this end
we introduce the notion of divisible families of risk measures below
and return to this example in Section~\ref{subsec:consistency}.
\end{rem}

\section{The risk-averse limit of discrete option pricing models}

Most well-known coherent risk measures in the literature as the Average
Value-at-Risk, the Entropic Value-at-Risk as well as the mean semi-deviation
involve a parameter which accounts for the degree of risk aversion.
As Remark~\ref{rem:risk-binom} elaborates, the nested risk-averse
binomial model does not necessarily lead to a well-defined limit.
It is essential to relate the coefficient of risk aversion of the
conditional risk measures to its time period. We therefore introduce
the notion of \emph{divisible} coherent risk measures. The divisibility
property is central in discussing the limiting behavior of risk-averse
economic models.
\begin{defn}[Divisible families of risk measures]
\label{def:divisibility}Let $p\geq1$ be fixed. A family $\rho=\left\{ \left.\rho_{\Delta t}\colon L^{p}\to\mathbb{R}\right|\Delta t>0\right\} $
of coherent measures of risk is called \emph{divisible}, if the following
two conditions are satisfied:
\begin{enumerate}
\item For $W\sim\mathcal{N}(0,1)$ normally distributed, 
\begin{equation}
\lim_{\Delta t\downarrow0}\,\frac{\rho_{\Delta t}(\sqrt{\Delta t}\cdot W)}{\Delta t}=s_{\rho}\label{eq:17-1}
\end{equation}
for some $s_{\rho}\ge0$.
\item Moreover there is a constant $C>0$ (independent of $Y$ and $\Delta t$)
such that 
\[
\rho_{\Delta t}(Y)\leq C\sqrt{\Delta t}\left\Vert Y\right\Vert _{p}
\]
for all $Y\in L^{p}$ with $\E Y=0$.
\end{enumerate}
We call a nested risk measure $\rho^{\mathcal{P}}$ divisible if every
conditional risk measure is divisible, i.e., the limit in~\eqref{eq:17-1}
holds for random variables which are conditionally normally distributed
and 
\[
\rho_{\Delta t}^{t}(Y)\leq C\sqrt{\Delta t}\E\big(\left|Y\right|^{p}\mid\mathcal{F}_{t}\big)^{\frac{1}{p}}
\]
for some constant $C>0$.
\end{defn}

\begin{rem}
The (conditional) expectation is divisible with $s_{\mathbb{E}}=0$.
For many other risk measures, the parameters can be adjusted. Candidates
for risk measures satisfying this condition are spectral risk measures
for which the spectral density is bounded in the $L^{q}$ norm for
$q=\frac{p}{p-1}$. The mean semi-deviation risk measure satisfies
the divisibility property as well.
\end{rem}

\begin{lem}
\label{lem:gauss}For $p\ge1$ and $\beta\ge0$, the family 
\[
\left\{ \SD_{p,\beta;\Delta t}:=\SD_{p,\beta\cdot\sqrt{\Delta t}}\right\} ,\quad\Delta t>0,
\]
of mean semi-deviations is divisible with limit 
\[
s_{\SD_{p,\beta}}=\beta\left(2\pi\right)^{-\frac{1}{2p}}2^{\frac{1}{2}-\frac{1}{2p}}\cdot\Gamma\left(\frac{p+1}{2}\right)^{\frac{1}{p}}.
\]
\end{lem}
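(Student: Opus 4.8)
The plan is to verify the two conditions of Definition \ref{def:divisibility} directly for the rescaled family $\SD_{p,\beta;\Delta t} = \SD_{p,\beta\sqrt{\Delta t}}$. Start by writing out, for a centered random variable $Y$ with $\E Y = 0$,
\[
\SD_{p,\beta\sqrt{\Delta t}}(Y) = \E Y + \beta\sqrt{\Delta t}\,\bigl\Vert (Y - \E Y)_+ \bigr\Vert_p = \beta\sqrt{\Delta t}\,\bigl\Vert Y_+ \bigr\Vert_p.
\]
Condition (2) is then immediate: since $\Vert Y_+\Vert_p \le \Vert Y\Vert_p$, we get $\SD_{p,\beta\sqrt{\Delta t}}(Y) \le \beta\sqrt{\Delta t}\,\Vert Y\Vert_p$ for all centered $Y$, so $C = \beta$ works. (For the conditional version one repeats this with $\E(\cdot\mid\mathcal F_t)$ in place of $\E$, using the conditional Jensen/monotonicity facts that $\rho^t$ inherits.)

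The substance is Condition (1): evaluate $\lim_{\Delta t\downarrow 0}\SD_{p,\beta\sqrt{\Delta t}}(\sqrt{\Delta t}\,W)/\Delta t$ for $W\sim\mathcal N(0,1)$. Since $\E(\sqrt{\Delta t}\,W) = 0$ and positive homogeneity of the norm pulls out the $\sqrt{\Delta t}$, we have
\[
\SD_{p,\beta\sqrt{\Delta t}}(\sqrt{\Delta t}\,W) = \beta\sqrt{\Delta t}\cdot\bigl\Vert (\sqrt{\Delta t}\,W)_+\bigr\Vert_p = \beta\,\Delta t\cdot\bigl\Vert W_+\bigr\Vert_p,
\]
so dividing by $\Delta t$ gives the constant value $\beta\,\Vert W_+\Vert_p$ for every $\Delta t>0$; in particular the limit exists and equals $\beta\,\Vert W_+\Vert_p$. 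It remains to compute $\Vert W_+\Vert_p = \bigl(\E (W_+)^p\bigr)^{1/p}$. Here
\[
\E (W_+)^p = \int_0^\infty x^p\,\frac{1}{\sqrt{2\pi}}e^{-x^2/2}\,dx,
\]
and the substitution $u = x^2/2$ turns this into a Gamma integral, yielding $\E (W_+)^p = \frac{1}{\sqrt{2\pi}}\,2^{(p-1)/2}\,\Gamma\!\left(\frac{p+1}{2}\right)$. Taking the $p$-th root and multiplying by $\beta$ produces
\[
s_{\SD_{p,\beta}} = \beta\,(2\pi)^{-\frac{1}{2p}}\,2^{\frac{p-1}{2p}}\,\Gamma\!\left(\tfrac{p+1}{2}\right)^{\frac{1}{p}} = \beta\,(2\pi)^{-\frac{1}{2p}}\,2^{\frac{1}{2}-\frac{1}{2p}}\,\Gamma\!\left(\tfrac{p+1}{2}\right)^{\frac{1}{p}},
\]
which is exactly the claimed constant.

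There is no real obstacle here: the only thing to be careful about is the bookkeeping of the exponents of $2$ and $2\pi$ in the Gamma-integral evaluation, and — if one wants to be thorough — checking that the conditional normal case in the second half of Definition \ref{def:divisibility} goes through verbatim, which it does because the same homogeneity and Gamma computation apply conditionally on $\mathcal F_t$ pointwise. I would present the computation of $\E(W_+)^p$ as the single displayed calculation and state the two divisibility conditions as short consequences.
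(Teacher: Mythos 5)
Your proposal is correct and follows essentially the same route as the paper: verify condition (2) via $\Vert Y_+\Vert_p\le\Vert Y\Vert_p$, and condition (1) by the Gaussian moment computation $\E(W_+)^p=\frac{1}{\sqrt{2\pi}}2^{(p-1)/2}\Gamma\bigl(\frac{p+1}{2}\bigr)$. The only cosmetic difference is that you pull out the factor $\sqrt{\Delta t}$ by homogeneity before integrating, so the ratio is constant in $\Delta t$, whereas the paper carries $\Delta t$ through the Gaussian integral; the computation and constants agree.
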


\begin{proof}
The second part of Definition~\ref{def:divisibility} is satisfied
as for $Y\in L^{p}$ such that $\E Y=0$ we have
\[
\SD_{\beta\sqrt{\Delta t},p}(Y)=\beta\sqrt{\Delta t}\left\Vert Y_{+}\right\Vert _{p}\leq\beta\sqrt{\Delta t}\left\Vert Y\right\Vert _{p}.
\]
Let $W\sim\mathcal{N}(0,1)$, then 
\begin{align*}
\E\left(\sqrt{\Delta t}W_{+}\right)^{p} & =\int_{\mathbb{R}}\max(w,0)^{p}\cdot\frac{1}{\sqrt{2\pi\Delta t}}e^{-\frac{w^{2}}{2\Delta t}}\,\mathrm{d}w=\frac{1}{\sqrt{2\pi\Delta t}}\int_{0}^{\infty}w^{p}\cdot e^{-\frac{w^{2}}{2\Delta t}}\,\mathrm{d}w.
\end{align*}
Employing the Gamma function, the latter integral is
\begin{align*}
\frac{1}{\sqrt{2\pi\Delta t}}\int_{0}^{\infty}w^{p}\cdot e^{-\frac{w^{2}}{2\Delta t}}\,\mathrm{d}w & =\frac{1}{\sqrt{2\pi}}2^{\frac{p-1}{2}}\Gamma\left(\frac{p+1}{2}\right)\Delta t^{\frac{p}{2}}.
\end{align*}
Taking the $p$-th root and multiplying by $\beta\sqrt{\Delta t}$
we obtain
\[
\frac{\SD_{p,\beta\sqrt{\Delta t}}(\sqrt{\Delta t}W)}{\Delta t}=\beta\left(2\pi\right)^{-\frac{1}{2p}}2^{\frac{1}{2}-\frac{1}{2p}}\cdot\Gamma\left(\frac{p+1}{2}\right)^{\frac{1}{p}},
\]
the assertion. 
\end{proof}
We now extend nested risk measures to continuous time and demonstrate
that the extension is well-defined for divisible families of risk
measures. As a result, we show that the risk-averse binomial option
pricing model converges exactly for divisible families of risk measures.
\begin{defn}[Nested risk measures]
\label{def:nRM2}Let $T>0$, $t\in[0,T)$ and let $\rho^{\mathcal{P}}$
be divisible for every partition $\mathcal{P}\subset[t,T]$, cf.\ Definition~\ref{def:nRM}.
The \emph{nested risk measure $\rho^{t:T}$ in continuous time for
a random variable $Y$} is
\begin{equation}
\rho^{t:T}\left(Y\left|\,\mathcal{F}_{t}\right.\right):=\lim_{\mathcal{P}\subset[t,T]}\,\rho^{\mathcal{P}}\left(Y\left|\,\mathcal{F}_{t}\right.\right)\qquad\text{almost surely},\label{eq:CnRM}
\end{equation}
where the almost sure limit is among all partitions $\mathcal{P}\subset[t,T]$
with mesh size $\left\Vert \mathcal{P}\right\Vert \coloneqq\max_{i=1,\dots,n}t_{i}-t_{i-1}$
tending to zero for those random variables $Y$, for which the limit
exists. 
\end{defn}

The following proposition evaluates the nested mean-semideviation
for the Wiener process, the basic building block of diffusion processes
and thus illustrates the main purpose of the divisibility condition.
\begin{prop}[Nested mean semi-deviation for the Wiener process]
\label{prop:randomWalk}Let $W=(W_{t})_{t\in\mathcal{P}}$ be a Wiener
process and $\mathcal{P}=\left(t_{0},t_{1},\dots,t_{n}\right)$ a
partition of $[0,T]$ with $\Delta t_{i}:=t_{i+1}-t_{i}$. For the
family of conditional risk measures $\left(\SD_{p,\beta_{t_{i}}\cdot\sqrt{\Delta t_{i}}}(\cdot\mid\mathcal{F}_{t_{i}})\right)_{t_{i}\in\mathcal{P}}$,
the nested mean semi-deviation is
\begin{equation}
\SD_{p,\beta}^{\mathcal{P}}(W_{T})=\sum_{i=0}^{n-1}\beta_{t_{i}}\Delta t_{i}\cdot\left(2\pi\right)^{-\frac{1}{2p}}2^{-\frac{1}{2}}\Gamma\left(\frac{p+1}{2}\right)^{\frac{1}{p}},\label{eq:D-W}
\end{equation}
where $\beta=(\beta_{t_{0}},\dots,\beta_{t_{n}})$ is a vector of
risk levels. 
\end{prop}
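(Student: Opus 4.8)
The natural approach is backward induction over the partition, exploiting the tower-like structure of nested risk measures together with the explicit increments of the Wiener process. The key observation is that on the Wiener process the increments $W_{t_{i+1}}-W_{t_i}$ are independent of $\mathcal F_{t_i}$ and normally distributed with variance $\Delta t_i$, so each conditional evaluation reduces to the unconditional computation already carried out in Lemma~\ref{lem:gauss}. Concretely, I would start from the innermost risk measure: write $W_T = W_{t_{n-1}} + (W_T - W_{t_{n-1}})$, and use conditional translation equivariance (Axiom~\ref{enu:equivariance} in its conditional form) to pull the $\mathcal F_{t_{n-1}}$-measurable part $W_{t_{n-1}}$ out of $\SD_{p,\beta_{t_{n-1}}\sqrt{\Delta t_{n-1}}}(\cdot\mid\mathcal F_{t_{n-1}})$. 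What remains is $\SD_{p,\beta_{t_{n-1}}\sqrt{\Delta t_{n-1}}}$ applied to a centered normal increment of variance $\Delta t_{n-1}$, which by the computation in the proof of Lemma~\ref{lem:gauss} (before taking the $p$-th root it is $\tfrac{1}{\sqrt{2\pi}}2^{(p-1)/2}\Gamma(\tfrac{p+1}{2})\Delta t_{n-1}^{p/2}$, hence after the root and multiplication by $\beta_{t_{n-1}}\sqrt{\Delta t_{n-1}}$) equals the constant $\beta_{t_{n-1}}\Delta t_{n-1}(2\pi)^{-1/(2p)}2^{-1/2}\Gamma(\tfrac{p+1}{2})^{1/p}$.

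So after the first step we obtain $\rho^{t_n}(W_T) = \SD_{p,\beta_{t_{n-1}}\sqrt{\Delta t_{n-1}}}(W_T\mid\mathcal F_{t_{n-1}}) = W_{t_{n-1}} + c_p\,\beta_{t_{n-1}}\Delta t_{n-1}$, where $c_p := (2\pi)^{-1/(2p)}2^{-1/2}\Gamma(\tfrac{p+1}{2})^{1/p}$ is a deterministic constant. The point is that the result is again of the form ``(a Wiener value at the previous node) plus a deterministic constant.'' Iterating, at stage $i$ one applies $\SD_{p,\beta_{t_i}\sqrt{\Delta t_i}}(\cdot\mid\mathcal F_{t_i})$ to $W_{t_{i+1}} + (\text{deterministic sum of later terms})$; the deterministic part passes through by translation equivariance, and $W_{t_{i+1}} = W_{t_i} + (W_{t_{i+1}}-W_{t_i})$ splits off the $\mathcal F_{t_i}$-measurable $W_{t_i}$ and leaves a centered normal increment of variance $\Delta t_i$, contributing $c_p\,\beta_{t_i}\Delta t_i$. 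After the final step (the outermost $\rho^{t_0}$), $W_{t_0}=W_0=0$, and one is left with the sum $\sum_{i=0}^{n-1} c_p\,\beta_{t_i}\Delta t_i$, which is exactly \eqref{eq:D-W}.

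The main thing to be careful about — the only real obstacle — is making the conditional version of the argument rigorous: one needs that the conditional mean semi-deviation $\SD_{p,\beta}(Z\mid\mathcal F_{t_i})$, applied to a random variable $Z$ that is independent of $\mathcal F_{t_i}$, coincides almost surely with the constant $\SD_{p,\beta}(Z)$, and that it satisfies conditional translation equivariance so that $\mathcal F_{t_i}$-measurable summands can be extracted. Both are standard properties of the conditional risk measures $\rho^t$ as constructed in \citet{Ruszczynski-R} and \citet[Section 6.8.2]{Shapiro2014}, already invoked in the text, so this is a matter of citing them correctly rather than a genuine difficulty. One should also note explicitly that the independent-increments property of $W$ is what guarantees the relevant increment is (conditionally) $\mathcal N(0,\Delta t_i)$, so that Lemma~\ref{lem:gauss}'s Gamma-function computation applies verbatim at each node; the induction then delivers the closed form.
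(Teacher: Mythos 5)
Your argument is correct and is essentially the paper's own proof: conditional translation equivariance to split off the $\mathcal{F}_{t_i}$-measurable part, independence and normality of the Wiener increments so that the Gamma-function computation of Lemma~\ref{lem:gauss} applies at each node, and iteration over the partition (with $W_{t_0}=W_0=0$) leaving the deterministic sum in~\eqref{eq:D-W}. (Both you and the paper's proof quote the constant with $2^{-\frac{1}{2}}$ as in the proposition's statement, whereas the root-and-multiply step of Lemma~\ref{lem:gauss} actually yields $2^{\frac{1}{2}-\frac{1}{2p}}$; this power-of-two discrepancy sits in the paper itself and does not affect the structure of your argument.)
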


\begin{proof}
Note that $W_{t_{i+1}}-W_{t_{i}}\sim\mathcal{N}(0,t_{i+1}-t_{i})$
and the conditional mean semi-deviation is (using conditional translation
equivariance~\ref{enu:equivariance})
\begin{align*}
\SD_{p,\beta_{t_{i}}\cdot\sqrt{\Delta t_{i}}}\left(W_{t_{i+1}}\left|\,W_{t_{i}}\right.\right) & =W_{t_{i}}+\SD_{p,\beta_{t_{i}};\sqrt{\Delta t_{i}}}\left(W_{t_{i+1}}-W_{t_{i}}\left|\,W_{t_{i}}\right.\right).
\end{align*}
As Brownian motion has independent and stationary increments with
mean zero the calculation in the proof of Lemma~\ref{lem:gauss}
shows that
\begin{align*}
\SD_{p,\beta_{t_{i}}\cdot\sqrt{\Delta t_{i}}}\left(W_{t_{i+1}}\left|\,W_{t_{i}}\right.\right) & =W_{t_{i}}+\beta_{t_{i}}\Delta t_{i}\cdot\left(2\pi\right)^{-\frac{1}{2p}}2^{-\frac{1}{2}}\Gamma\left(\frac{p+1}{2}\right)^{\frac{1}{p}}.
\end{align*}
Iterating as in Definition~\ref{def:nRM} shows
\begin{align*}
\SD_{p,\beta}^{\mathcal{P}}(W_{T}) & =\sum_{i=0}^{n-1}\beta_{t_{i}}\Delta t_{i}\cdot\left(2\pi\right)^{-\frac{1}{2p}}2^{-\frac{1}{2}}\Gamma\left(\frac{p+1}{2}\right)^{\frac{1}{p}},
\end{align*}
the assertion.
\end{proof}
\begin{rem}
For constant risk levels $\beta_{t_{i}}=\widetilde{\beta}$ we obtain
\[
\SD_{p,\beta}^{\mathcal{P}}(W_{T})=\sum_{i=0}^{n-1}\Delta t_{i}\cdot\widetilde{\beta}\cdot\left(2\pi\right)^{-\frac{1}{2p}}2^{-\frac{1}{2}}\Gamma\left(\frac{p+1}{2}\right)^{\frac{1}{p}}=T\cdot s_{\SD_{p,\widetilde{\beta}}},
\]
the accumulated risk grows linearly in time.
\end{rem}

\subsection{The risk generator}

This section addresses nested risk measures for Itô processes. Furthermore,
we characterize convergence under risk using a natural condition involving
normal random variables and introduce a nonlinear operator, the \emph{risk
generator,} which also allows discussing risk-averse optimal control
problems. 

It is well-known that the binomial model in Figure~\ref{fig:multi}
converges to the geometric Brownian motion. We therefore discuss Itô
processes $(X_{s})_{s\in\mathcal{T}}$ solving the stochastic differential
equation
\begin{align}
\mathrm{d}X_{s} & =b(s,X_{s})\,\mathrm{d}s+\sigma(s,X_{s})\,\mathrm{d}W_{s},\quad s\in\mathcal{T},\label{eq:SDE}\\
X_{t} & =x\nonumber 
\end{align}
for $\mathcal{T}=[t,\,T]$. We assume that $X$ following~\eqref{eq:SDE}
is well-defined and satisfy the so-called \emph{usual conditions}
of \citet[Theorem~5.2.1]{Oksendal2003}.

We introduce the \emph{risk generator} for divisible families of coherent
risk measures. The risk generator describes the momentary evolution
of the risk of the stochastic process.
\begin{defn}[Risk generator]
\label{def:RiskGen}Let $X=(X_{t})_{t}$ be a continuous time process
and $(\rho_{\Delta t})_{\Delta t}$ be a family of divisible risk
measures. The \emph{risk generator} \emph{based on }$(\rho_{\Delta t})_{\Delta t}$
is 
\begin{equation}
\mathcal{R}_{\rho}\Phi(t,x):=\lim_{\Delta t\downarrow0}\frac{1}{\Delta t}\Bigl(\rho_{\Delta t}\bigl(\Phi(t+\Delta t,X_{t+\Delta t})\left|\,X_{t}=x\right.\bigr)-\Phi(t,x)\Bigr)\label{eq:riskGen}
\end{equation}
for those functions $\Phi\colon\mathcal{T}\times\mathbb{R}\to\mathbb{R}$,
for which the limit exists.
\end{defn}

Using the ideas from Proposition~\ref{prop:randomWalk} we obtain
explicit expressions for the risk generator for Itô diffusion processes.
\begin{prop}[Risk generator]
\label{prop:RGen}Let the family $(\rho_{\Delta t})_{\Delta t}$
be divisible for some $p\ge1$ fixed. Let $X$ be the solution of~\eqref{eq:SDE}
and $\Phi\in C^{2}(\mathcal{T}\times\mathbb{R})$ such that $\sigma\,\Phi_{x}$
is Hölder continuous for $\alpha>0$ in $p$\nobreakdash-th mean,
i.e., there exists $C>0$ such that $\E C^{p}<\infty$ and
\begin{equation}
\left|\left(\sigma\,\Phi_{x}\right)(t,X_{t})-\left(\sigma\,\Phi_{x}\right)(s,X_{s})\right|\leq C\cdot\left|t-s\right|^{\alpha},\qquad s,t\in\mathcal{T}.\label{eq:hoelder-continuity}
\end{equation}
Then the risk generator based on $(\rho_{\Delta t})_{\Delta t}$ is
given by the nonlinear differential operator
\begin{align}
\mathcal{R}_{\rho}\Phi(t,x) & =\left(\Phi_{t}+b\,\Phi_{x}+\frac{\sigma^{2}}{2}\Phi_{xx}+s_{\rho}\cdot\left|\sigma\,\Phi_{x}\right|\right)(t,x).\label{eq:21}
\end{align}
\end{prop}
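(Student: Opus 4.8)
The plan is to combine It\^o's formula with the sublinearity of $\rho_{\Delta t}$. After expanding $\Phi(t+\Delta t,X_{t+\Delta t})$ by It\^o's formula, only one random term survives the limit in~\eqref{eq:riskGen}, namely the Brownian increment scaled by $(\sigma\,\Phi_x)(t,x)$; the drift terms assemble into the classical generator and everything else is negligible. Concretely, writing $\mathcal A\Phi:=\Phi_t+b\,\Phi_x+\tfrac{\sigma^2}{2}\Phi_{xx}$ for the classical infinitesimal generator, It\^o's formula applied to $\Phi(s,X_s)$ gives
\[
\Phi(t+\Delta t,X_{t+\Delta t})-\Phi(t,x)=\int_t^{t+\Delta t}\mathcal A\Phi(s,X_s)\,\mathrm ds+\int_t^{t+\Delta t}(\sigma\,\Phi_x)(s,X_s)\,\mathrm dW_s.
\]
Splitting both integrals around their values at $(t,x)$, I would rewrite this as
\[
\Phi(t+\Delta t,X_{t+\Delta t})-\Phi(t,x)=\mathcal A\Phi(t,x)\,\Delta t+A+R_1+R_2,
\]
with the main random term $A:=(\sigma\,\Phi_x)(t,x)\,(W_{t+\Delta t}-W_t)$, the drift remainder $R_1:=\int_t^{t+\Delta t}\bigl(\mathcal A\Phi(s,X_s)-\mathcal A\Phi(t,x)\bigr)\,\mathrm ds$, and the martingale remainder $R_2:=\int_t^{t+\Delta t}\bigl((\sigma\,\Phi_x)(s,X_s)-(\sigma\,\Phi_x)(t,x)\bigr)\,\mathrm dW_s$, where $\E R_2=0$.

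For the main term, $W_{t+\Delta t}-W_t=\sqrt{\Delta t}\,Z$ with $Z\sim\mathcal N(0,1)$ and $(\sigma\,\Phi_x)(t,x)$ is a deterministic scalar, so positive homogeneity~\ref{enu:Homogeneous} together with law invariance~\ref{enu:LawInvariance} (using $-Z\sim Z$) give $\rho_{\Delta t}(A)=\bigl|(\sigma\,\Phi_x)(t,x)\bigr|\,\rho_{\Delta t}(\sqrt{\Delta t}\,Z)$, exactly as in the computation behind Proposition~\ref{prop:randomWalk}. Dividing by $\Delta t$ and invoking the first divisibility condition~\eqref{eq:17-1} then yields $\tfrac1{\Delta t}\rho_{\Delta t}(A)\to s_\rho\,\bigl|(\sigma\,\Phi_x)(t,x)\bigr|$.

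The substance of the argument is to show that $R_1$ and $R_2$ do not contribute after dividing by $\Delta t$. For $R_2$, the $p$-th-mean H\"older hypothesis~\eqref{eq:hoelder-continuity} bounds the integrand pointwise by $C\,|s-t|^\alpha$ with $\E C^p<\infty$, so a Burkholder--Davis--Gundy estimate gives $\|R_2\|_p\lesssim\Delta t^{\alpha+\frac12}=o(\sqrt{\Delta t})$. For $R_1$, continuity of $\mathcal A\Phi$, continuity of the paths of $X$ with $X_t=x$, and the moment bounds granted by the usual conditions on~\eqref{eq:SDE} upgrade the almost sure convergence $\mathcal A\Phi(s,X_s)\to\mathcal A\Phi(t,x)$ as $s\downarrow t$ to convergence in $L^p$, whence $\|R_1\|_p\le\Delta t\cdot\sup_{s\in[t,t+\Delta t]}\|\mathcal A\Phi(s,X_s)-\mathcal A\Phi(t,x)\|_p=o(\Delta t)$ and, a fortiori, $|\E R_1|=o(\Delta t)$.

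Finally I would assemble the pieces using the risk axioms. By translation equivariance~\ref{enu:equivariance} the deterministic part separates, $\rho_{\Delta t}\bigl(\mathcal A\Phi(t,x)\Delta t+A+R_1+R_2\bigr)=\mathcal A\Phi(t,x)\,\Delta t+\rho_{\Delta t}(A+D)$ with $D:=R_1+R_2$. Subadditivity~\ref{enu:Convexity}, applied in both directions, yields $-\rho_{\Delta t}(-D)\le\rho_{\Delta t}(A+D)-\rho_{\Delta t}(A)\le\rho_{\Delta t}(D)$; writing $D=\E R_1+(D-\E R_1)$, using~\ref{enu:equivariance} once more, and applying the second part of Definition~\ref{def:divisibility} to the mean-zero variable $D-\E R_1$ bounds $\rho_{\Delta t}(\pm D)$ by $|\E R_1|+C\sqrt{\Delta t}\,\|D-\E R_1\|_p=o(\Delta t)$. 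Hence $\rho_{\Delta t}\bigl(\Phi(t+\Delta t,X_{t+\Delta t})\mid X_t=x\bigr)-\Phi(t,x)=\mathcal A\Phi(t,x)\,\Delta t+\rho_{\Delta t}(A)+o(\Delta t)$, and dividing by $\Delta t$ and sending $\Delta t\downarrow0$ produces~\eqref{eq:21}. The main obstacle is precisely the control of the two remainders: $R_2$ is exactly what the hypothesis~\eqref{eq:hoelder-continuity} is designed to tame through an $L^p$ maximal inequality, while $R_1$ forces one to turn a.s.\ path-continuity into $L^p$-continuity, which is where the integrability from the usual conditions is used; once both remainders are $o(\Delta t)$, the nonlinear contribution collapses to the one-dimensional Gaussian computation already carried out for the Wiener process in Proposition~\ref{prop:randomWalk}.
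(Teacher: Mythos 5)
Your proof is correct and follows essentially the same route as the paper: Itô expansion, isolating the term $(\sigma\,\Phi_x)(t,x)\,(W_{t+\Delta t}-W_t)$ whose risk is computed via the first divisibility condition, killing the martingale remainder with the Hölder hypothesis, Burkholder--Davis--Gundy and the second divisibility condition (yielding $O(\Delta t^{1+\alpha})$), and absorbing the drift remainder as $o(\Delta t)$. Your explicit two-sided subadditivity sandwich and the mean-centering of the remainder are just a slightly more careful bookkeeping of the same estimates the paper carries out.
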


\begin{rem}
In the appendix we provide a sufficient condition for the Assumption~\eqref{eq:hoelder-continuity}.
 
\end{rem}

\begin{proof}
By assumption, $\Phi\in C^{2}(\mathcal{T}\times\mathbb{R})$ and hence
we may apply Itô's formula. For convenience and ease of notation we
set $f_{1}(t,x):=\left(\Phi_{t}+b\,\Phi_{x}+\frac{\sigma^{2}}{2}\Phi_{xx}\right)(t,x)$
and $f_{2}(t,x):=\left(\sigma\,\Phi_{x}\right)(t,x)$. In this setting,
Eq.~\eqref{eq:riskGen} rewrites as
\begin{align*}
\mathcal{R}_{\rho}\Phi(t,x) & =\lim_{\Delta t\downarrow0}\,\frac{1}{\Delta t}\rho_{\Delta t}^{t}\left[\left.\int_{t}^{t+\Delta t}f_{1}(s,X_{s})\,\mathrm{d}s+\int_{t}^{t+\Delta t}f_{2}(s,X_{s})\,\mathrm{d}W_{s}\right|\,X_{t}=x\right].
\end{align*}
To show~\eqref{eq:21} for each fixed $(t,x)$ it is enough to show
that
\begin{equation}
\left|\mathcal{R}_{\rho}\Phi(t,x)-f_{1}(t,x)-s_{\rho}\left|f_{2}(t,x)\right|\right|\leq0.\label{eq:7-1}
\end{equation}
Using the properties \ref{enu:equivariance}\textendash \ref{enu:Homogeneous}
of coherent risk measures together with the triangle inequality we
bound the left side of~\eqref{eq:7-1} by
\begin{align}
\lim_{\Delta t\downarrow0} & \left|\rho_{\Delta t}^{t}\left[\left.\frac{1}{\Delta t}\int_{t}^{t+\Delta t}f_{1}(s,X_{s})\mathrm{d}s-f_{1}(t,x)\right|\,X_{t}=x\right]\right|\nonumber \\
 & +\lim_{\Delta t\downarrow0}\left|\rho_{\Delta t}^{t}\left[\left.\frac{1}{\Delta t}\int_{t}^{t+\Delta t}f_{2}(s,X_{s})\mathrm{d}W_{s}-s_{\rho}\left|f_{2}(t,x)\right|\right|\,X_{t}=x\right]\right|.\label{eq:??}
\end{align}
We continue by looking at each term separately. Note that $s\mapsto f_{1}(s,X_{s})-f_{1}(t,x)$
is continuous almost surely and hence the mean value theorem for definite
integrals implies that there exists a $\xi\in[t,t+\Delta t]$ such
that
\[
\frac{1}{\Delta t}\int_{t}^{t+\Delta t}f_{1}(s,X_{s})\mathrm{d}s-f_{1}(t,x)=f_{1}(\xi,X_{\xi})-f_{1}(t,x),\quad\text{almost surely}.
\]
From continuity of $\rho$ in the $L^{p}$ norm we may conclude
\[
\lim_{\Delta t\downarrow0}\,\frac{1}{\Delta t}\rho_{\Delta t}^{t}\left(\left.\left|\int_{t}^{t+\Delta t}f_{1}(s,X_{s})-f_{1}(t,x)\,\mathrm{ds}\right|\,\right|X_{t}=x\right)=0.
\]

Note that the stochastic integral term in~\eqref{eq:??} can be bounded
by

\begin{align*}
\rho_{\Delta t}^{t}\left[\left.\frac{1}{\Delta t}\int_{t}^{t+\Delta t}f_{2}(s,X_{s})\mathrm{d}W_{s}\right|\,X_{t}=x\right] & \leq\rho_{\Delta t}^{t}\left[\left.\frac{1}{\Delta t}\int_{t}^{t+\Delta t}f_{2}(s,X_{s})-f_{2}(t,x)\mathrm{d}W_{s}\right|\,X_{t}=x\right]\\
 & +\,\rho_{\Delta t}^{t}\left[\left.\frac{1}{\Delta t}\int_{t}^{t+\Delta t}f_{2}(t,x)\mathrm{d}W_{s}\right|\,X_{t}=x\right],
\end{align*}
where $\rho_{\Delta t}^{t}\left[\left.\frac{1}{\Delta t}\int_{t}^{t+\Delta t}f_{2}(t,x)\mathrm{d}W_{s}\right|\,X_{t}=x\right]$
converges to $s_{\rho}\left|f_{2}(t,x)\right|$ and hence
\begin{align*}
\eqref{eq:??} & \leq\lim_{\Delta t\downarrow0}\left|\rho_{\Delta t}^{t}\left[\left.\frac{1}{\Delta t}\int_{t}^{t+\Delta t}f_{2}(s,X_{s})-f_{2}(t,x)\mathrm{d}W_{s}\right|\,X_{t}=x\right]\right|.
\end{align*}
Furthermore, the stochastic integral $M_{\Delta t}:=\int_{t}^{t+\Delta t}f_{2}(s,X_{s})-f_{2}(t,x)\mathrm{d}W_{s}$
is a continuous martingale with $M_{0}=0$ and by divisibility there
exists a constant $\widetilde{C}$ independent of $\Delta t$ and
$M_{\Delta t}$ such that
\[
\rho_{\Delta t}^{t}(M_{\Delta t})\leq\widetilde{C}\sqrt{\Delta t}\cdot\left\Vert M_{\Delta t}\right\Vert _{p}.
\]
Applying the Burkholder\textendash Davis\textendash Gundy inequality
implies the upper bound
\begin{align*}
\left\Vert M_{\Delta t}\right\Vert _{p} & \leq C_{\mathit{BDG}}\left[\E\left|\int_{t}^{t+\Delta t}\left(f_{2}(s,X_{s})-f_{2}(t,x)\right)^{2}\mathrm{d}s\right|^{\frac{p}{2}}\right]^{\frac{1}{p}}
\end{align*}
for some constant $C_{\mathit{BDG}}$ depending on $p$. By assumption
there exists a random $C>0$ such that 
\begin{align*}
\E\left(\int_{t}^{t+\Delta t}\left(f_{2}(s,X_{s})-f_{2}(t,x)\right)^{2}\mathrm{d}s\right)^{\frac{p}{2}} & \leq\E\left(\int_{t}^{t+\Delta t}C^{2}\left|s-t\right|^{2\alpha}\mathrm{d}s\right)^{\frac{p}{2}}=\left(\frac{\Delta t^{2\alpha+1}}{2\alpha+1}\right)^{\frac{p}{2}}\E C^{p}.
\end{align*}
Therefore,
\begin{align*}
\rho_{\Delta t}^{t}(M_{\Delta t}) & \leq\widetilde{C}\cdot C_{\mathit{BDG}}\sqrt{\Delta t}\left\Vert C\right\Vert _{p}\left(\frac{\Delta t^{2\alpha+1}}{2\alpha+1}\right)^{\frac{1}{2}}=\frac{\widetilde{C}\cdot C_{\mathit{BDG}}}{\sqrt{2\alpha+1}}\left\Vert C\right\Vert _{p}\Delta t^{1+\alpha},
\end{align*}
such that $\frac{1}{\Delta t}\rho_{\Delta t}^{t}(M_{\Delta t})$ vanishes
for $\Delta t\to0$, which concludes the proof.
\end{proof}
\begin{rem}[Relation to $g$-expectation]
The risk generator $\mathcal{R}_{\rho}$ can be decomposed as the
sum of the classical generator \emph{plus} the nonlinear term $s_{\rho}\left|\sigma\frac{\partial\Phi}{\partial x}\right|$.
The additional risk term is a directed drift term, where the uncertain
drift $\frac{\partial\Phi}{\partial x}(t,X_{t})$ scales with volatility
$\sigma$ and the coefficient $s_{\rho}$, which expresses risk aversion.
We want to emphasize that the nonlinear term $s_{\rho}\left|\sigma\frac{\partial\Phi}{\partial x}\right|$
is exactly the driver of a backwards stochastic differential equation
describing a coherent risk measure, also known as $g$-expectation.
Our approach is thus a constructive interpretation of the dynamic
risk measures discussed in \citet{Peng2004,Delong2013}. 

For absent risk, $s_{\rho}=0$, we obtain the classical \textendash{}
risk-neutral \textendash{} infinitesimal generator. Furthermore, if
$\sigma=0$, i.e., no randomness occurs in the model, the generator
reduces to a first order differential operator describing the dynamics
of a deterministic system, where risk does not apply.
\end{rem}

For random variables $Y$ of the form
\[
Y=\int_{t}^{T}c(s,X_{s})\,\mathrm{d}s+\Psi(X_{T}),
\]
where $X$ is an Itô diffusion process based on Brownian motion and
$c$, $\Psi$ are sufficiently smooth functions, the limit~\eqref{eq:CnRM}
exists as a consequence of Definition~\ref{def:divisibility} as
well as the arguments in the proof of Proposition~\ref{prop:RGen}
above.

\subsection{Dynamic programming}

This section introduces risk-averse dynamic equations using nested
risk measures. In what follows we consider the value function involving
nested risk measures defined by
\begin{equation}
V(t,x):=\rho^{t:T}\left(e^{-r(T-t)}\,\Psi(X_{T})\mid X_{t}=x\right).\label{eq:riskValue}
\end{equation}
Here, $r$ is a discount factor and $\Psi$ a terminal payoff function.
The structure of nested risk measures allows extending the dynamic
programming principle to the risk-averse setting.
\begin{lem}[Dynamic programming principle]
\label{lem:DPP}Let $(t,x)\in[0,T)\times\mathbb{R}$ and $\Delta t>0$,
then it holds that
\begin{equation}
V(t,x)=\rho^{t:t+\Delta t}\left(\left.e^{-r\Delta t}\,V(t+\Delta t,X_{t+\Delta t}\right|X_{t}=x\right).\label{eq:Principle}
\end{equation}
\end{lem}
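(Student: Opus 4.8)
The plan is to reduce the statement to the algebraic composition property of nested risk measures (Definition~\ref{def:nRM}), combined with the Markov property of $X$ and the multiplicative/additive invariance axioms, and then to pass to the continuous-time limit of Definition~\ref{def:nRM2}.

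First I would observe that the limit defining $\rho^{t:T}$ in~\eqref{eq:CnRM} may be computed along partitions $\mathcal{P}$ of $[t,T]$ that contain the point $t+\Delta t$, since these are cofinal among all partitions whose mesh tends to zero. Fix such a $\mathcal{P}=(t=t_{0}<\dots<t_{k}=t+\Delta t<\dots<t_{n}=T)$ and split it into $\mathcal{P}_{1}:=(t_{0},\dots,t_{k})\subset[t,t+\Delta t]$ and $\mathcal{P}_{2}:=(t_{k},\dots,t_{n})\subset[t+\Delta t,T]$. By the very definition of the nested risk measure as an iterated composition of conditional risk measures,
\[
\rho^{\mathcal{P}}(Y)=\rho^{\mathcal{P}_{1}}\bigl(\rho^{\mathcal{P}_{2}}(Y)\bigr),
\]
where the inner expression $\rho^{\mathcal{P}_{2}}(Y)$ is $\mathcal{F}_{t+\Delta t}$\nobreakdash-measurable.

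Next I would specialise to $Y=e^{-r(T-t)}\,\Psi(X_{T})$ and factor the discount as $e^{-r(T-t)}=e^{-r\Delta t}\cdot e^{-r(T-(t+\Delta t))}$. Applying the conditional version of positive homogeneity~\ref{enu:Homogeneous} at every stage of $\rho^{\mathcal{P}_{2}}$ pulls the positive deterministic constant $e^{-r\Delta t}$ outside, so that
\[
\rho^{\mathcal{P}_{2}}\bigl(Y\mid\mathcal{F}_{t+\Delta t}\bigr)=e^{-r\Delta t}\,\rho^{\mathcal{P}_{2}}\bigl(e^{-r(T-(t+\Delta t))}\Psi(X_{T})\mid\mathcal{F}_{t+\Delta t}\bigr).
\]
Since the conditional risk measures are law invariant and $X$ is Markovian, a backwards induction over the nodes of $\mathcal{P}_{2}$ shows that the right-hand factor is a deterministic function of $X_{t+\Delta t}$, namely $x\mapsto\rho^{\mathcal{P}_{2}}\bigl(e^{-r(T-(t+\Delta t))}\Psi(X_{T})\mid X_{t+\Delta t}=x\bigr)$; call it $V_{\mathcal{P}_{2}}(t+\Delta t,\,\cdot\,)$. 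Letting $\|\mathcal{P}_{2}\|\to0$, Definition~\ref{def:nRM2} identifies the limit with $V(t+\Delta t,\,\cdot\,)$ of~\eqref{eq:riskValue}; the relevant almost sure limit exists because $Y$ has the payoff form for which existence was noted right after the proof of Proposition~\ref{prop:RGen}, and $V(t+\Delta t,X_{t+\Delta t})\in L^{p}$ under the standing smoothness and growth assumptions on $\Psi$ and the diffusion coefficients.

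Finally I would pass to the limit in $\rho^{\mathcal{P}}(Y\mid X_{t}=x)=\rho^{\mathcal{P}_{1}}\bigl(e^{-r\Delta t}\,V_{\mathcal{P}_{2}}(t+\Delta t,X_{t+\Delta t})\mid X_{t}=x\bigr)$, first refining $\mathcal{P}_{2}$ and then $\mathcal{P}_{1}$. The main obstacle is justifying this interchange of limits: one must show that replacing $V_{\mathcal{P}_{2}}(t+\Delta t,\,\cdot\,)$ by its limit $V(t+\Delta t,\,\cdot\,)$ inside $\rho^{\mathcal{P}_{1}}$ while simultaneously sending $\|\mathcal{P}_{1}\|\to0$ is legitimate. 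I would handle this by combining subadditivity~\ref{enu:Convexity} and monotonicity~\ref{enu:Monotonicity} of the conditional risk measures with the divisibility bounds of Definition~\ref{def:divisibility} — exactly the estimates used in the proof of Proposition~\ref{prop:RGen} — to control $\rho^{\mathcal{P}_{1}}$ acting on the error $V_{\mathcal{P}_{2}}(t+\Delta t,X_{t+\Delta t})-V(t+\Delta t,X_{t+\Delta t})$, whose $L^{p}$ norm tends to $0$ by the locally uniform convergence $V_{\mathcal{P}_{2}}\to V(t+\Delta t,\,\cdot\,)$ together with a $p$\nobreakdash-th moment bound on $X_{t+\Delta t}$ and dominated convergence. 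Once the interchange is justified, the outer limit over $\mathcal{P}_{1}\subset[t,t+\Delta t]$ is precisely the continuous-time nested risk measure of Definition~\ref{def:nRM2}, and we obtain
\[
V(t,x)=\lim_{\mathcal{P}\subset[t,T]}\rho^{\mathcal{P}}(Y\mid X_{t}=x)=\rho^{t:t+\Delta t}\bigl(e^{-r\Delta t}\,V(t+\Delta t,X_{t+\Delta t})\mid X_{t}=x\bigr),
\]
which is~\eqref{eq:Principle}.
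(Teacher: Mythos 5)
Your proposal is correct and follows essentially the same route as the paper: identify $V(t+\Delta t,X_{t+\Delta t})$ with $\rho^{t+\Delta t:T}\bigl(e^{-r(T-t-\Delta t)}\Psi(X_{T})\mid X_{t+\Delta t}\bigr)$, pull the constant discount factor $e^{-r\Delta t}$ out by (conditional) positive homogeneity, and invoke the compositional (tower) structure of the nested risk measure to recombine $\rho^{t:t+\Delta t}\circ\rho^{t+\Delta t:T}$ into $\rho^{t:T}$. The paper treats this tower property as immediate from the construction in Definitions~\ref{def:nRM} and~\ref{def:nRM2}, whereas you make the partition-splitting and limit-interchange explicit; that extra detail is consistent with, not different from, the paper's argument.
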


\begin{proof}
By definition of the risk-averse value function~\eqref{eq:riskValue}
it holds that 
\[
V(t+\Delta t,X_{t+\Delta t})=\rho^{t+\Delta t:T}\left(e^{-r(T-t-\Delta t)}\Psi(X_{T})\mid X_{t+\Delta t}\right)
\]
and hence the construction of the nested risk measure gives
\begin{align*}
\rho^{t:t+\Delta t}\left(\left.e^{-r\Delta t}V(t+\Delta t,X_{t+\Delta t})\right|X_{t}=x\right) & =\rho^{t:T}\left(e^{-r(T-t)}\Psi(X_{T})\mid X_{t}=x\right),
\end{align*}
which shows the assertion.
\end{proof}
To derive the dynamic equations for $V$ we rearrange~\eqref{eq:Principle}
in the form
\begin{equation}
0=\frac{1}{\Delta t}\rho^{t:t+\Delta t}\left(\left.e^{-r\Delta t}V(t+\Delta t,X_{t+\Delta t})-V(t,x)\right|X_{t}=x\right)\label{eq:DPP2}
\end{equation}
and let $\Delta t\to0$. The following theorem employs the risk generator
to obtain dynamic equations for the risk-averse value function~\eqref{eq:riskValue}.
\begin{thm}
\label{thm:ValuePDE}The value function~\eqref{eq:riskValue} solves
the terminal value problem
\begin{align}
V_{t}(t,x)+b(t,x)V_{x}(t,x)+\frac{\sigma^{2}(t,x)}{2}V_{xx}(t,x)+s_{\rho}\left|\sigma(t,x)\cdot V_{x}(t,x)\right|-rV(t,x) & =0,\label{eq:5}\\
V(T,x) & =\Psi(x),\nonumber 
\end{align}
provided that $V\in C^{2}$ in a neighborhood of $(t,x)$ and $\sigma\cdot V_{x}$
satisfies the Hölder continuity assumption from Proposition~\ref{prop:RGen}.
\end{thm}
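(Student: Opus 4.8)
The plan is to combine the dynamic programming principle from Lemma~\ref{lem:DPP} with the risk generator formula from Proposition~\ref{prop:RGen}. First I would start from the rearranged dynamic programming equation~\eqref{eq:DPP2}, namely
\[
0=\frac{1}{\Delta t}\,\rho^{t:t+\Delta t}\left(\left.e^{-r\Delta t}\,V(t+\Delta t,X_{t+\Delta t})-V(t,x)\right|X_{t}=x\right),
\]
which holds for every $\Delta t>0$ by Lemma~\ref{lem:DPP}. The idea is to let $\Delta t\downarrow0$ and identify the limit with the risk generator applied to $V$, up to the discounting correction. To handle the discount factor I would write $e^{-r\Delta t}V(t+\Delta t,X_{t+\Delta t})=V(t+\Delta t,X_{t+\Delta t})+(e^{-r\Delta t}-1)V(t+\Delta t,X_{t+\Delta t})$ and split the expression using translation equivariance~\ref{enu:equivariance} together with positive homogeneity~\ref{enu:Homogeneous} and subadditivity~\ref{enu:Convexity} of the conditional risk measures. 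Since $e^{-r\Delta t}-1=-r\Delta t+o(\Delta t)$ and $V$ is continuous near $(t,x)$, the contribution of the discounting term to the $\frac{1}{\Delta t}$-scaled expression converges to $-r\,V(t,x)$, while the divisibility bound (second part of Definition~\ref{def:divisibility}) controls the error terms coming from the difference $V(t+\Delta t,X_{t+\Delta t})-V(t,x)$ multiplied by the $o(\Delta t)$ factor.

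Next I would invoke the hypothesis that $V\in C^{2}$ near $(t,x)$ and that $\sigma\cdot V_{x}$ satisfies the H\"older continuity assumption~\eqref{eq:hoelder-continuity}. Under these conditions Proposition~\ref{prop:RGen} applies directly with $\Phi=V$, giving
\[
\lim_{\Delta t\downarrow0}\frac{1}{\Delta t}\Bigl(\rho_{\Delta t}\bigl(V(t+\Delta t,X_{t+\Delta t})\mid X_{t}=x\bigr)-V(t,x)\Bigr)=\Bigl(V_{t}+b\,V_{x}+\tfrac{\sigma^{2}}{2}V_{xx}+s_{\rho}\,|\sigma\,V_{x}|\Bigr)(t,x).
\]
Combining this with the discounting limit computed above yields that the left-hand side of~\eqref{eq:DPP2} converges to
\[
V_{t}(t,x)+b(t,x)\,V_{x}(t,x)+\tfrac{\sigma^{2}(t,x)}{2}V_{xx}(t,x)+s_{\rho}\,\bigl|\sigma(t,x)\,V_{x}(t,x)\bigr|-r\,V(t,x),
\]
and since the left-hand side is identically zero for every $\Delta t>0$, so is the limit, which establishes the PDE~\eqref{eq:5}. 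The terminal condition $V(T,x)=\Psi(x)$ follows immediately from the definition~\eqref{eq:riskValue}, since $\rho^{T:T}$ acting on $\Psi(X_{T})$ conditioned on $X_{T}=x$ returns $\Psi(x)$ by monotonicity and translation equivariance (the continuous-time nested risk measure over a degenerate interval is the identity).

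The main obstacle I anticipate is making the splitting of the discount factor rigorous inside the nested/conditional risk measure: one cannot simply pull the scalar $e^{-r\Delta t}$ out, because $\rho^{t:t+\Delta t}$ is only the continuous-time nested object of Definition~\ref{def:nRM2}, not a single coherent risk measure, and positive homogeneity together with subadditivity only give two-sided bounds rather than exact linearity. The clean way around this is to sandwich: using \ref{enu:equivariance}--\ref{enu:Homogeneous} one gets
\[
e^{-r\Delta t}\rho^{t:t+\Delta t}\!\bigl(V(t+\Delta t,X_{t+\Delta t})\bigr)\;\le\;\rho^{t:t+\Delta t}\!\bigl(e^{-r\Delta t}V(t+\Delta t,X_{t+\Delta t})\bigr)\;\le\;e^{-r\Delta t}\rho^{t:t+\Delta t}\!\bigl(V(t+\Delta t,X_{t+\Delta t})\bigr)
\]
when $e^{-r\Delta t}\ge 0$ and the risk measure is positively homogeneous (so equality in fact holds for the scalar multiplication, provided $e^{-r\Delta t}\geq 0$, which is always the case); the only genuinely nontrivial step is then replacing $\rho^{t:t+\Delta t}$ by $\rho_{\Delta t}$ on a short interval, which is exactly what the risk generator Definition~\ref{def:RiskGen} and Proposition~\ref{prop:RGen} are built to do. Once that identification is in place, the remaining estimates are routine consequences of $C^{2}$-regularity of $V$, the H\"older assumption, and the divisibility bound, so I would present those briefly rather than in full detail.
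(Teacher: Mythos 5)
Your proposal is correct and follows essentially the same route as the paper: start from the dynamic programming identity~\eqref{eq:DPP2} and identify the $\Delta t\downarrow0$ limit via the risk-generator machinery of Proposition~\ref{prop:RGen}, with the identification of $\rho^{t:t+\Delta t}$ with $\rho_{\Delta t}$ over a short interval treated at the same level of rigor as in the paper. The only difference is cosmetic: the paper absorbs the discounting by applying It\^o's formula to $Y_{s}=e^{-r(s-t)}V(s,X_{s})$ and rerunning the estimates of Proposition~\ref{prop:RGen} for that process, whereas you factor out the deterministic scalar $e^{-r\Delta t}$ (which positive homogeneity~\ref{enu:Homogeneous} permits exactly, so your subadditivity sandwich is not even needed) and then apply Proposition~\ref{prop:RGen} as a black box to $\Phi=V$, which yields the same term $-rV(t,x)$.
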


\begin{proof}
Let $(t,x)\in[0,T]\times\mathbb{R}$ be fixed. Similarly to the risk-neutral
case we define
\[
Y_{s}:=e^{-r(s-t)}V(s,X_{s}),\qquad s\geq t.
\]
By the Itô formula, the process $Y_{s}$ satisfies 
\begin{align*}
Y_{t+\Delta t}= & Y_{t}+\int_{t}^{t+\Delta t}e^{-r(s-t)}\left(V_{t}+b\cdot V_{x}+\frac{\sigma^{2}}{2}V_{xx}\right)(s,X_{s})-rV(s,X_{s})\,\mathrm{d}s\\
 & +\int_{t}^{t+\Delta t}e^{-r(s-t)}\sigma(s,X_{s})\cdot V_{x}(s,X_{s})\mathrm{d}W_{s}.
\end{align*}
As $\int_{t}^{t+\Delta t}\left(\sigma\cdot V_{x}\right)(t,x)\,\mathrm{d}W_{s}$
is normally distributed it follows from divisibility that 
\[
\lim_{\Delta t\downarrow0}\,\frac{1}{\Delta t}\rho^{t:t+\Delta t}\left(\left.\int_{t}^{t+\Delta t}\left(\sigma\cdot V_{x}\right)(t,x)\,\mathrm{d}W_{s}\right|X_{t}=x\right)=s_{\rho}\cdot\left|\sigma\cdot\partial_{x}V\right|(t,x)
\]
and thus following the lines of the proof of Proposition~\ref{prop:RGen}
shows
\begin{align*}
0 & =\lim_{\Delta t\downarrow0}\,\frac{1}{\Delta t}\rho^{t:t+\Delta t}\left(\left.Y_{t+\Delta t}-Y_{t}\right|X_{t}=x\right)\\
 & =\lim_{\Delta t\downarrow0}\,\frac{1}{\Delta t}\rho^{t:t+\Delta t}\left(\int_{t}^{t+\Delta t}e^{-r(s-t)}\left(V_{t}+b\cdot V_{x}+\frac{\sigma^{2}}{2}V_{xx}-rV\right)\mathrm{d}s+\int_{t}^{t+\Delta t}e^{-r(s-t)}\sigma\cdot V_{x}\mathrm{d}W_{s}\right)\\
 & =V_{t}(t,x)+b(t,x)\cdot V_{x}(t,x)+\frac{\sigma^{2}(t,x)}{2}V_{xx}(t,x)+s_{\rho}\left|\sigma(t,x)\cdot V_{x}(t,x)\right|-rV(t,x),
\end{align*}
demonstrating the assertion.
\end{proof}
\begin{rem}[Optimal controls]
\label{rem:HJB}The dynamic programming principle and Theorem~\ref{thm:ValuePDE}
are usually considered in an environment involving adapted controls~$u$.
This extends to the risk-averse setting as well. Here, we consider
the value function
\[
V(t,x):=\inf_{u}\,\rho^{t:T}\left(\int_{t}^{T}e^{-r(s-t)}c(s,X_{s}^{u},u_{s})\,\mathrm{d}s+e^{-r(T-t)}\Psi(X_{T}^{u})\right),
\]
where $X^{u}$ is a controlled diffusion process (see \citet{Fleming2006}).
Following the ideas in~\citet{Fleming2006} and using the structure
of nested risk measures as in the proof of Lemma~\ref{lem:DPP} we
may derive dynamic programming equations as
\[
V(t,x)=\inf_{u}\rho^{t:t+\Delta t}\left(\left.\int_{t}^{t+\Delta t}e^{-r(s-t)}c(s,X_{s}^{u},u_{s})\,\mathrm{d}s+e^{-r\Delta t}\,V(t+\Delta t,X_{t+\Delta t}\right|X_{t}=x\right).
\]
Moreover, following standard arguments, the Hamilton\textendash Jacobi\textendash Bellman
equation
\begin{align*}
\inf_{u}\left\{ V_{t}(\cdot)+b(\cdot,u)V_{x}(\cdot)+\frac{\sigma^{2}(\cdot,u)}{2}V_{xx}(\cdot)+s_{\rho}\left|\sigma(\cdot,u)\cdot V_{x}(\cdot)\right|-rV(\cdot)+c(\cdot,u)\right\}  & =0,\\
V(T,\cdot) & =\Psi(\cdot)
\end{align*}
characterizes the value function~$V$. We resume this discussion
in Section~\ref{sec:Merton} below.
\end{rem}

\section{\label{sec:EU}Pricing of options under risk}

The previous section discusses a discrete, risk-averse binomial option
pricing problem and studies the divisibility property of families
of risk measures. In this section we study the risk-averse value functions
of the limiting process of the binomial tree process, i.e., the geometric
Brownian motion. In the risk-averse setting we find again explicit
formulae. The resulting explicit pricing formulae lead us to interpret
risk aversion as dividend payments and to relate the risk level $s_{\rho}$
to the Sharpe ratio. Moreover, we establish the relationship between
divisibility and the convergence of binomial models under risk.

Consider a market with one riskless asset (a bond, e.g.) and a risky
asset, usually a stock. The return of the riskless asset is constant
and denoted by $r$. As usual in the classical Black\textendash Scholes
framework, the underlying stock $S$ is modeled by a geometric Brownian
motion following the stochastic differential equation
\begin{align}
\mathrm{d}S_{t} & =r\,S_{t}\,\mathrm{d}t+\sigma\,S_{t}\,\mathrm{d}W_{t}\label{eq:geom}
\end{align}
with initial value $S_{0}$.

\subsection{The risk-averse Black\textendash Scholes model for European options}

Similarly as above we distinguish the risk-averse value function 
\begin{equation}
V(t,x):=-\rho^{t:T}\left[-e^{-r(T-t)}\,\Psi(S_{T})\mid S_{t}=x\right]\label{eq:BSM-bid}
\end{equation}
for the \emph{bid price} and the corresponding value function for
the \emph{ask price} given by
\begin{equation}
\widetilde{V}(t,x):=\rho^{t:T}\left[e^{-r(T-t)}\,\Psi(S_{T})\mid S_{t}=x\right].\label{eq:BSM-ask}
\end{equation}
Notice that the discount rate $r$ is the same as in the dynamics~\eqref{eq:geom}
of the stock $S=(S_{t})_{t}$. In the risk-neutral setting the bid
and ask prices coincide.

Theorem~\ref{thm:ValuePDE} shows that the risk-averse value function~\eqref{eq:BSM-bid}
of the bid price satisfies the PDE 
\begin{align}
V_{t}(t,x)+r\,x\,V_{x}(t,x)+\frac{\sigma^{2}\,x^{2}}{2}V_{xx}(t,x)-s_{\rho}\cdot\left|\sigma\,x\cdot V_{x}(t,x)\right|-r\,V(t,x) & =0,\label{eq:PDE-bid}\\
V(T,x) & =\Psi(x),\nonumber 
\end{align}
the terminal value $\Psi(x)$ is the payoff function for either the
European put or call option. Similarly, the following PDE describes
the ask price $\widetilde{V}$, 
\begin{align}
\widetilde{V}_{t}(t,x)+r\,x\,\widetilde{V}_{x}(t,x)+\frac{\sigma^{2}\,x^{2}}{2}\widetilde{V}_{xx}(t,x)+s_{\rho}\cdot\left|\sigma\,x\cdot\widetilde{V}_{x}(t,x)\right|-r\,\widetilde{V}(t,x) & =0,\label{eq:PDE-ask}\\
\widetilde{V}(T,x) & =\Psi(x).\nonumber 
\end{align}
Notice that~\eqref{eq:PDE-bid} and~\eqref{eq:PDE-ask} differ only
in the sign of the nonlinear term, showing again that in the risk-neutral
setting (i.e., $s_{\rho}=0)$ the bid and ask prices coincide. We
have the following explicit solution of~\eqref{eq:PDE-bid} and~\eqref{eq:PDE-ask}
for the price of the call option.
\begin{prop}[Call option]
\label{prop:risky-BSM-2}Let $\Psi(x):=\max(x-K,0)$, define the
auxiliary functions (cf.\ \citet[Section 4.4]{Schachermayer+Delbaen-R})
\begin{equation}
d_{1}^{\pm}\coloneqq\frac{1}{\sigma\sqrt{T-t}}\cdot\left[\log\left(\frac{x}{K}\right)+\left(r\pm s_{\rho}\,\sigma+\frac{1}{2}\sigma^{2}\right)(T-t)\right],\qquad d_{2}^{\pm}\coloneqq d_{1}^{\pm}-\sigma\sqrt{T-t}\label{eq:7}
\end{equation}
and the value functions 
\begin{equation}
V^{\pm}(t,x):=xe^{\pm s_{\rho}\sigma(T-t)}\Phi(d_{1}^{\pm})-Ke^{-r(T-t)}\cdot\Phi(d_{2}^{\pm}),\label{eq:valueCall}
\end{equation}
where $\Phi$ denotes the cumulative distribution function of the
standard normal distribution. Then $V^{+}$ solves the risk-averse
Black\textendash Scholes PDE~\eqref{eq:PDE-ask} for the ask price,
while $V^{-}$ solves~\eqref{eq:PDE-bid}, the corresponding PDE
for the bid price; further, we have that $V^{-}\le V^{+}$. 
\end{prop}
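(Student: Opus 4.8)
The plan is to exploit that for a call payoff the price is nondecreasing in the spot, which lets one resolve the modulus in \eqref{eq:PDE-bid} and \eqref{eq:PDE-ask} and collapses those nonlinear equations to \emph{linear} Black--Scholes equations. Indeed, wherever a solution satisfies $V_x\ge 0$ one has $\bigl|\sigma\,x\,V_x\bigr|=\sigma\,x\,V_x$ (since $\sigma>0$ and $x\ge 0$), so \eqref{eq:PDE-ask} reads $\widetilde V_t+(r+s_\rho\sigma)\,x\,\widetilde V_x+\tfrac{\sigma^2x^2}{2}\widetilde V_{xx}-r\,\widetilde V=0$ and \eqref{eq:PDE-bid} reads $V_t+(r-s_\rho\sigma)\,x\,V_x+\tfrac{\sigma^2x^2}{2}V_{xx}-r\,V=0$. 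These are precisely the Black--Scholes equations for a stock with effective growth rate $r\pm s_\rho\sigma$ but unchanged discount rate $r$, i.e.\ Merton's equation for a continuous dividend yield $q=\mp s_\rho\sigma$ \citep{Merton1973}, equivalently the foreign-exchange equation of \citet{Garman1983} with foreign short rate $\mp s_\rho\sigma$.

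First I would record that the $V^\pm$ of \eqref{eq:valueCall} are exactly the closed forms of these linear terminal value problems: substituting $q=\mp s_\rho\sigma$ into the dividend-adjusted call formula $x\,e^{-q(T-t)}\Phi(d_1)-K\,e^{-r(T-t)}\Phi(d_2)$, with $d_1=\tfrac{1}{\sigma\sqrt{T-t}}\bigl[\log(x/K)+(r-q+\tfrac12\sigma^2)(T-t)\bigr]$ and $d_2=d_1-\sigma\sqrt{T-t}$, reproduces \eqref{eq:7} and \eqref{eq:valueCall}. That this function solves the corresponding linear PDE is the classical Black--Scholes verification (cf.\ \citet[Section 4.4]{Schachermayer+Delbaen-R}); the terminal condition $V^\pm(T,x)=\max(x-K,0)$ follows by letting $t\uparrow T$, using that $d_1^\pm,d_2^\pm\to+\infty$ when $x>K$ and $\to-\infty$ when $x<K$, while $e^{\pm s_\rho\sigma(T-t)}\to1$ and $e^{-r(T-t)}\to1$.

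It then remains to discharge the sign hypothesis $V^\pm_x\ge 0$ that legitimized dropping the modulus. The usual delta computation gives $V^\pm_x(t,x)=e^{\pm s_\rho\sigma(T-t)}\Phi(d_1^\pm)$, because the terms proportional to $\phi(d_1^\pm)$ and $\phi(d_2^\pm)$ cancel via $x\,e^{\pm s_\rho\sigma(T-t)}\phi(d_1^\pm)=K\,e^{-r(T-t)}\phi(d_2^\pm)$, an identity implied by $d_2^\pm=d_1^\pm-\sigma\sqrt{T-t}$ and \eqref{eq:7}. Hence $V^\pm_x\ge 0$, so $V^+$ indeed solves \eqref{eq:PDE-ask} and $V^-$ solves \eqref{eq:PDE-bid}. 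For $V^-\le V^+$ I would write both in Black forward form $V^\pm(t,x)=e^{-r(T-t)}\bigl(F^\pm\,\Phi(d_1^\pm)-K\,\Phi(d_2^\pm)\bigr)$ with $F^\pm:=x\,e^{(r\pm s_\rho\sigma)(T-t)}$ and $d_1^\pm=\tfrac{1}{\sigma\sqrt{T-t}}\bigl[\log(F^\pm/K)+\tfrac12\sigma^2(T-t)\bigr]$; Black's formula is increasing in the forward (its $F$-derivative is $e^{-r(T-t)}\Phi(d_1^\pm)\ge 0$ after the same cancellation) and $F^-\le F^+$ because $s_\rho\ge 0$, so $V^-\le V^+$. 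Alternatively, once uniqueness for the linear problems identifies $V^\pm$ with the value functions \eqref{eq:BSM-bid} and \eqref{eq:BSM-ask}, this is just $-\rho^{t:T}(-Y)\le\rho^{t:T}(Y)$, inherited from $-\rho(-Y)\le\rho(Y)$ through the nesting and the continuous-time limit.

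The only genuine obstacle, and it is mild, is the sign bookkeeping: one must match the ``$+$'' (ask) equation with the \emph{negative} dividend yield $q=-s_\rho\sigma$ and hence with the \emph{larger} price $V^+$, and one should derive $V_x\ge 0$ rather than assume it, which is done cleanly by first solving the linearized equation and only then checking the sign of the delta. Everything else is a direct specialization of standard Black--Scholes identities.
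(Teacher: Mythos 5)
Your proof is correct and follows essentially the same route as the paper, which simply verifies the closed forms by substitution into~\eqref{eq:PDE-bid} and~\eqref{eq:PDE-ask}; the key observation in both is that $V^{\pm}_x=e^{\pm s_{\rho}\sigma(T-t)}\Phi(d_1^{\pm})\ge 0$ resolves the modulus and reduces the nonlinear equations to the dividend-adjusted (Merton/Garman--Kohlhagen) Black--Scholes equations with yield $q=\mp s_{\rho}\sigma$. You additionally spell out details the paper leaves implicit, namely the a posteriori delta-sign check and the forward-monotonicity argument giving $V^-\le V^+$, which is a welcome completion rather than a different method.
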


We can solve the problem for the European put option similarly.
\begin{prop}[European Put option]
\label{prop:risky-BSM}Let $\Psi(x):=\max(K-x,0)$ and define the
value functions
\begin{equation}
V^{\mp}(t,x):=Ke^{-r(T-t)}\cdot\Phi(-d_{2}^{\mp})-xe^{\mp s_{\rho}\sigma(T-t)}\Phi(-d_{1}^{\mp}),\label{eq:valuePut}
\end{equation}
with $d_{1}^{\pm},d_{2}^{\pm}$ and $\Phi$ as in Proposition~\ref{prop:risky-BSM-2}.
Then $V^{-}$ solves the risk-averse Black\textendash Scholes PDE~\eqref{eq:PDE-ask}
and $V^{+}$ solves \eqref{eq:PDE-bid}, respectively. Note that $V^{+}\le V^{-}$.
\end{prop}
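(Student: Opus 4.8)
The plan is a direct verification: I would show that $V^{\mp}$ is exactly the classical Black--Scholes price of a European put on a stock carrying a continuous dividend yield $q^{\mp}:=\pm s_{\rho}\,\sigma$, for which everything is explicit. The starting point is to recognise that $V^{\mp}$ admits the probabilistic representation
\begin{equation*}
V^{\mp}(t,x)=e^{-r(T-t)}\,\E\Bigl[\max\bigl(K-x\,e^{(r\mp s_{\rho}\sigma-\sigma^{2}/2)(T-t)+\sigma\sqrt{T-t}\,Z}\,,\,0\bigr)\Bigr],\qquad Z\sim\mathcal{N}(0,1).
\end{equation*}
Evaluating this Gaussian integral reproduces~\eqref{eq:valuePut} with $d_{1}^{\mp},d_{2}^{\mp}$ as in~\eqref{eq:7} (alternatively, $V^{\mp}$ follows from the already-proven call formula via the put--call parity $V^{\mp}_{\mathrm{put}}=V^{\mp}_{\mathrm{call}}-x\,e^{\pm s_{\rho}\sigma(T-t)}+K e^{-r(T-t)}$, which is immediate from $\Phi(z)+\Phi(-z)=1$ at $z=d_{1}^{\mp}$ and $z=d_{2}^{\mp}$). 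From this representation, Feynman--Kac (equivalently, a direct computation of $V^{\mp}_{t},V^{\mp}_{x},V^{\mp}_{xx}$ using the identity $K e^{-r(T-t)}\Phi'(d_{2}^{\mp})=x\,e^{-q^{\mp}(T-t)}\Phi'(d_{1}^{\mp})$, which makes the density terms cancel) shows that $V^{\mp}\in C^{1,2}\bigl((0,T)\times(0,\infty)\bigr)$ solves the \emph{linear} terminal value problem
\begin{equation*}
V_{t}+\bigl(r\mp s_{\rho}\sigma\bigr)\,x\,V_{x}+\tfrac{\sigma^{2}x^{2}}{2}V_{xx}-rV=0,\qquad V(T,x)=\max(K-x,0).
\end{equation*}

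The decisive step is the sign of the delta. Differentiating (or reading it off the representation, using $\tfrac{\partial}{\partial x}\max(K-y,0)=-\one_{\{y<K\}}$) gives
\begin{equation*}
V^{\mp}_{x}(t,x)=-e^{\mp s_{\rho}\sigma(T-t)}\,\Phi\bigl(-d_{1}^{\mp}\bigr)<0\qquad\text{for all }(t,x)\in(0,T)\times(0,\infty).
\end{equation*}
Since $\sigma>0$ and $x>0$ this gives $\bigl|\sigma\,x\,V^{\mp}_{x}\bigr|=-\sigma\,x\,V^{\mp}_{x}$ on the whole domain, so the nonlinear term $+s_{\rho}\bigl|\sigma x V_{x}\bigr|$ in~\eqref{eq:PDE-ask} reduces to $-s_{\rho}\sigma x V_{x}$, turning~\eqref{eq:PDE-ask} precisely into the linear equation above with $r-s_{\rho}\sigma$, which is solved by $V^{-}$; symmetrically, the term $-s_{\rho}\bigl|\sigma x V_{x}\bigr|$ in~\eqref{eq:PDE-bid} reduces to $+s_{\rho}\sigma x V_{x}$, so~\eqref{eq:PDE-bid} becomes the linear equation with $r+s_{\rho}\sigma$, solved by $V^{+}$. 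The terminal condition follows by letting $t\uparrow T$: the prefactors $e^{\mp s_{\rho}\sigma(T-t)}\to1$, while $d_{1}^{\mp},d_{2}^{\mp}\to+\infty$ for $x<K$ and $\to-\infty$ for $x>K$, hence $V^{\mp}(t,x)\to K\,\one_{\{x<K\}}-x\,\one_{\{x<K\}}=\max(K-x,0)$ (the value at $x=K$ being $0$).

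For the ordering $V^{+}\le V^{-}$ I would argue from the representation: for every outcome of $Z$, the argument $x\,e^{(r\mp s_{\rho}\sigma-\sigma^{2}/2)(T-t)+\sigma\sqrt{T-t}\,Z}$ is larger for the lower sign in $\mp$ (the case of $V^{+}$, which carries $r+s_{\rho}\sigma$) than for the upper sign, because $s_{\rho}\ge0$; as the put payoff $y\mapsto\max(K-y,0)$ is decreasing, the integrand is pointwise smaller in the $V^{+}$ case, and taking expectations and multiplying by $e^{-r(T-t)}>0$ yields $V^{+}\le V^{-}$. Equivalently, $V^{+}$ is the bid value function~\eqref{eq:BSM-bid} and $V^{-}$ the ask value function~\eqref{eq:BSM-ask}, and the iterated form of $-\rho(-Y)\le\rho(Y)$ applied to $Y=e^{-r(T-t)}\Psi(S_{T})$ gives $\mathrm{bid}\le\mathrm{ask}$.

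The part needing the most care is the sign accounting: matching the $\pm/\mp$ labels in~\eqref{eq:valuePut} to the correct PDE among~\eqref{eq:PDE-ask} and~\eqref{eq:PDE-bid}, and — more substantively — confirming that $V^{\mp}_{x}$ is negative on \emph{all} of $(0,T)\times(0,\infty)$, since it is exactly this global one-sided sign that lets the absolute value in~\eqref{eq:PDE-ask}/\eqref{eq:PDE-bid} be removed unambiguously and the nonlinear problem collapse to a linear one. Everything else is the routine Black--Scholes differentiation already carried out for the call in Proposition~\ref{prop:risky-BSM-2}.
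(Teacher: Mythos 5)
Your proof is correct and in essence coincides with the paper's, which simply states that plugging the formulas~\eqref{eq:valuePut} into~\eqref{eq:PDE-ask} and~\eqref{eq:PDE-bid} verifies the claim. You make explicit the one point that this substitution really hinges on, namely that $V^{\mp}_{x}=-e^{\mp s_{\rho}\sigma(T-t)}\,\Phi(-d_{1}^{\mp})<0$ on all of $(0,T)\times(0,\infty)$, so the absolute value resolves globally and each nonlinear PDE collapses to the linear dividend-type Black--Scholes equation with drift $r\mp s_{\rho}\sigma$; you also give a pointwise monotonicity argument in the Gaussian representation for $V^{+}\le V^{-}$, which the paper asserts without comment, so your route (Feynman--Kac plus sign of the delta) buys a transparent reason why an explicit solution exists at all, while the paper's bare substitution is shorter. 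One small slip in your parenthetical aside: put--call parity here reads $V^{\mp}_{\mathrm{put}}=V^{\mp}_{\mathrm{call}}-x\,e^{\mp s_{\rho}\sigma(T-t)}+Ke^{-r(T-t)}$, i.e.\ the stock term carries the factor with the \emph{same} sign as the label of $V^{\mp}$ (as follows from $\Phi(z)+\Phi(-z)=1$ applied to~\eqref{eq:valueCall} and~\eqref{eq:valuePut}), not $e^{\pm s_{\rho}\sigma(T-t)}$ as you wrote; this does not affect your main argument, which proceeds via the direct Gaussian computation.
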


\begin{proof}
Plugging the value functions into the PDE~\eqref{eq:PDE-ask} and~\eqref{eq:PDE-bid}
shows the assertion.
\end{proof}

\subsection{Rationale of risk aversion in the new formulae}

\subsubsection{On the nature of the risk level $s_{\rho}$}

The Propositions~\ref{prop:risky-BSM-2} and~\ref{prop:risky-BSM}
show that the value function for the risk-averse European option pricing
problem can be identified with the risk-neutral problem, where the
stock pays dividends. In case of the bid price of a European call
option the risk dividend is $s_{\rho}\,\sigma$. Similarly, the dividend
for the bid price for a European put option is $-s_{\rho}\,\sigma$,
thus negative. For an increasing risk aversion coefficients $s_{\rho}$,
the bid price for the put and the call price decrease. This monotonicity
reverses for the ask price. It is important to note that stocks do
not pay negative dividends and thus negative risk dividends may be
interpreted as a premium for holding the option rather than a dividend
payment from the underlying stock. 

The value functions~\eqref{eq:valueCall} and~\eqref{eq:valuePut}
can also be interpreted within the framework of the Garman\textendash Kohlhagen
model on foreign exchange options. In this sense $s_{\rho}\,\sigma$
corresponds to the interest rate in the foreign currency. We illustrate
this for the bid price of a European call option. Recall that the
value of a call option into a foreign currency with interest rate
$r_{f}$ satisfies

\begin{align*}
V^{GK}(t,x) & :=x\,e^{-r_{f}(T-t)}\cdot\Phi(d_{1}^{-})-Ke^{-r_{d}(T-t)}\cdot\Phi(d_{2}^{-}),
\end{align*}
where $r_{d}$ is the interest in the domestic currency and
\[
d_{1}^{-}\coloneqq\frac{1}{\sigma\sqrt{T-t}}\cdot\left[\log\left(\frac{x}{K}\right)+\left(r_{d}-r_{f}+\frac{1}{2}\sigma^{2}\right)(T-t)\right],\qquad d_{2}^{-}\coloneqq d_{1}^{-}-\sigma\sqrt{T-t}.
\]
Comparing with Equation~\eqref{eq:valueCall} we notice that $r$
can be identified with the domestic interest rate $r_{d}$ ($r_{d}=r$)
and $s_{\rho}\,\sigma$ with the foreign interest rate $r_{f}$ ($r_{f}=s_{\rho}\,\sigma$),
which bears the risk. The option price $V^{GK}$ represents the value
in domestic currency of a call option. Risk aversion is encoded in
the underlying, which is the foreign currency.

A risk-averse investor assumes a return $\mu_{\mathit{averse}}$ for
the underlying asset. Subsection~\ref{subsec:Illustration} below
then identifies $s_{\rho}\,\sigma$ with $r_{d}-\mu_{\mathit{averse}}$.
Comparing with the Garman\textendash Kohlhagen model we observe that
the foreign currency $r_{f}$ encodes the spread between the risk-neutral
and the risk-averse setting.

\subsubsection{Illustration of the risk level $s_{\rho}$ \label{subsec:Illustration}}

Figure~\ref{fig:prices} displays risk-averse prices for put and
call options  from buyer's and seller's perspectives. As a reference
we include the risk-neutral Black\textendash Scholes price as well.
For this illustration we choose $T=1$ with strike $K=1.2$, the interest
rate is $r=3\,\%$ and the volatility is $\sigma=15\,\%$. Figure~\ref{fig:spread}
exhibits the bid-ask spread, which is present in the risk-averse situation.
\begin{figure}[H]
\centering{}\subfloat[Call prices]{\begin{centering}
\includegraphics[viewport=10bp 5bp 420bp 305bp,clip,width=0.49\textwidth]{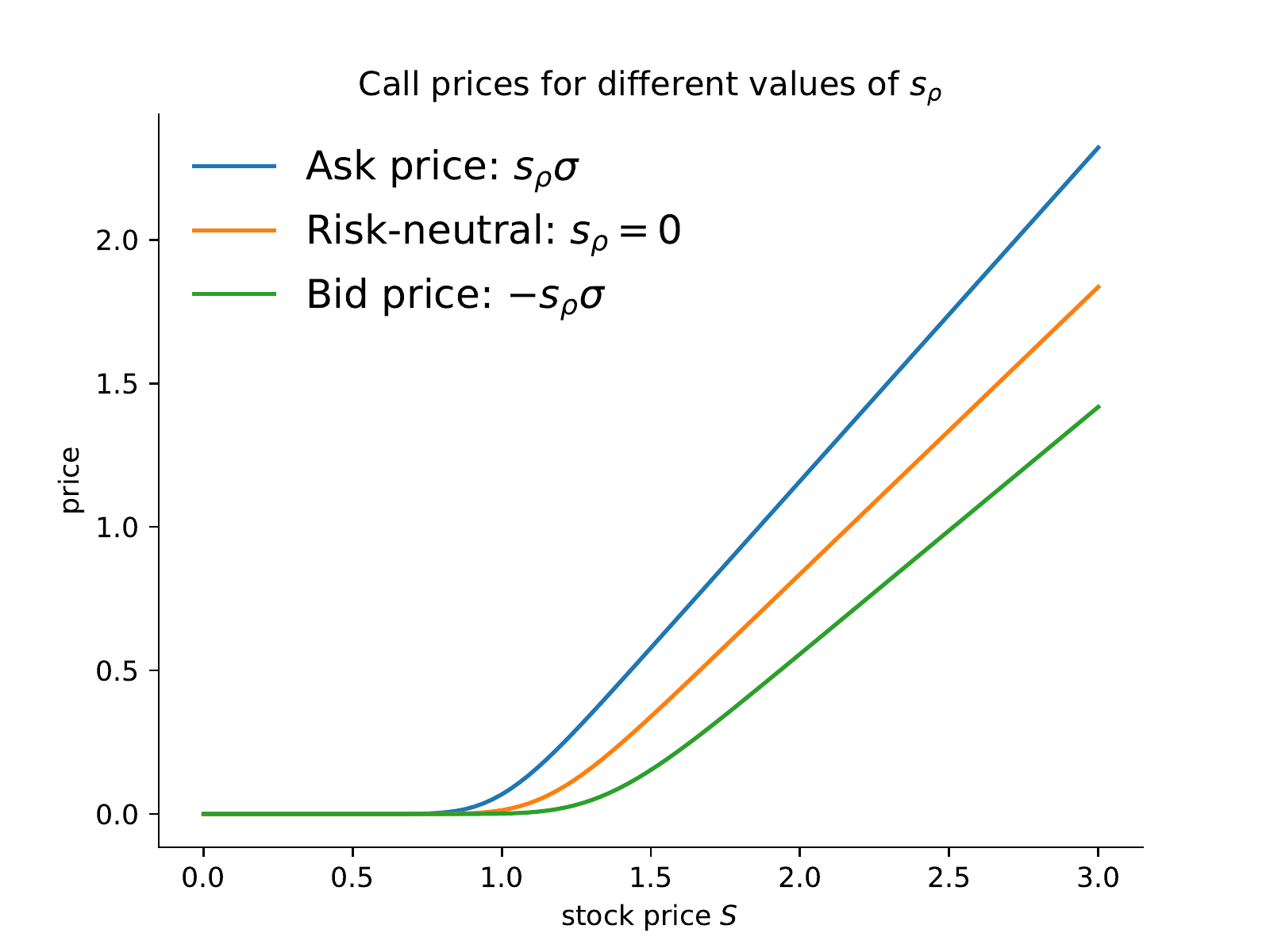}
\par\end{centering}
}\subfloat[Put prices]{\begin{centering}
\includegraphics[viewport=10bp 5bp 420bp 307bp,clip,width=0.49\textwidth]{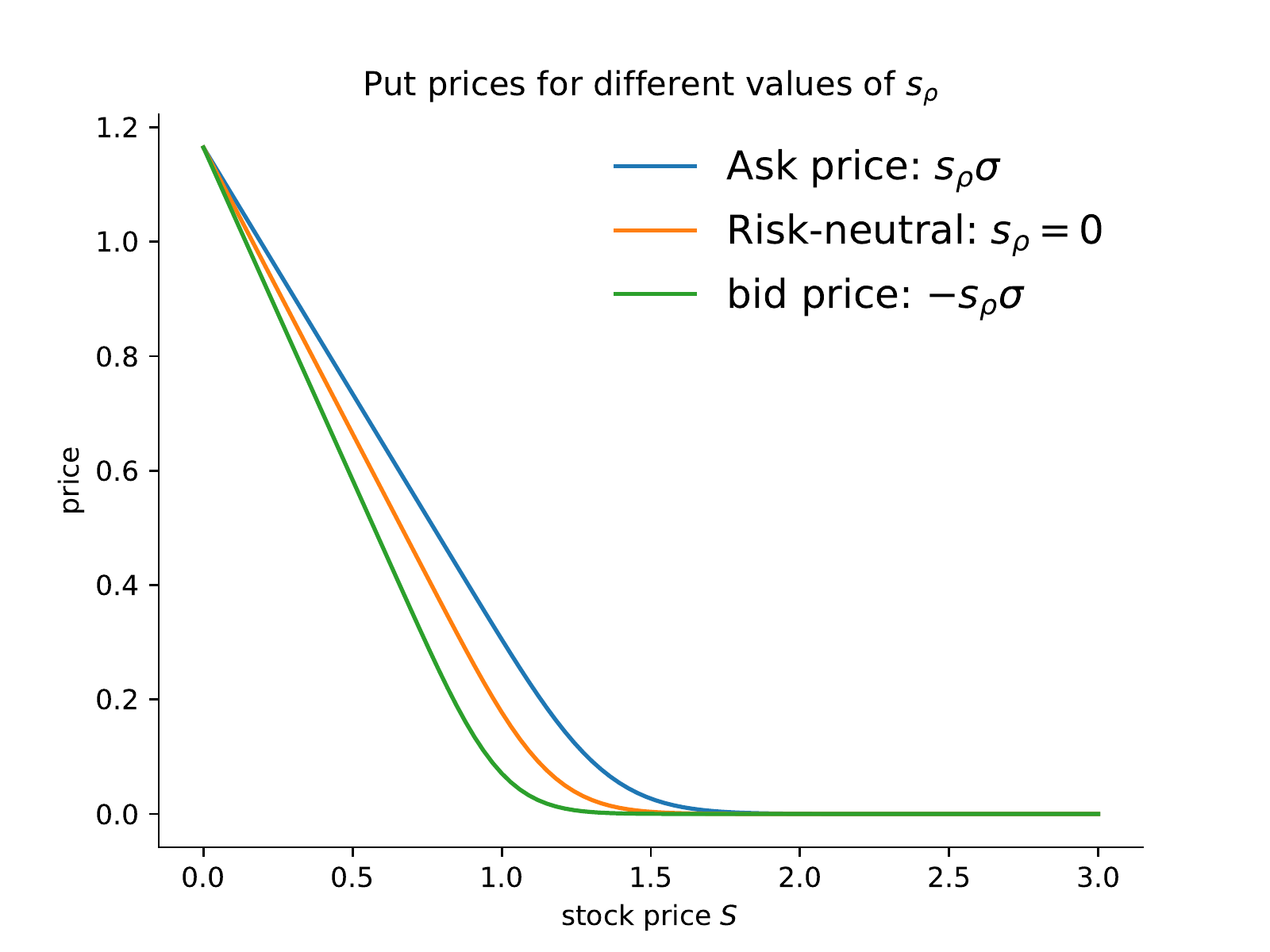}
\par\end{centering}
}\caption{\label{fig:prices}European option prices for different risk levels}
\end{figure}

\subsubsection{Discussion of the risk level $s_{\rho}$}

The Sharpe ratio is 
\[
\frac{\mu-r}{\sigma},
\]
where $\mu$ is the mean return of an asset with volatility $\sigma$
and $r$ is the risk free interest rate. Comparing units in~\eqref{eq:7}
we see that $s_{\rho}\,\sigma$ is an interest rate and hence $s_{\rho}$
has unit
\[
\frac{\mathit{interest}}{\mathit{volatility}},
\]
the same unit as the Sharpe ratio. 

To explore that the risk-aversion coefficient $s_{\rho}$ has the
structure of a Sharpe ratio denote by $\mu_{\mathit{averse}}$ the
mean return a risk-averse investor expects. Depending on the sign
we may equate 
\begin{equation}
\frac{\mu_{\mathit{averse}}-r}{\sigma}=\pm s_{\rho}\label{eq:26}
\end{equation}
with $s_{\rho}$ as in~\eqref{eq:17-1} above. The parallel shift
\[
r-\mu_{\mathit{averse}}=\pm s_{\rho}\cdot\sigma
\]
over the risk free interest derived from~\eqref{eq:26} is known
as \emph{Z-spread} in economics.
\noindent \begin{center}
\begin{figure}[H]
\centering{}\subfloat[European call option]{\begin{centering}
\includegraphics[viewport=0bp 0bp 461bp 307bp,clip,width=0.49\textwidth]{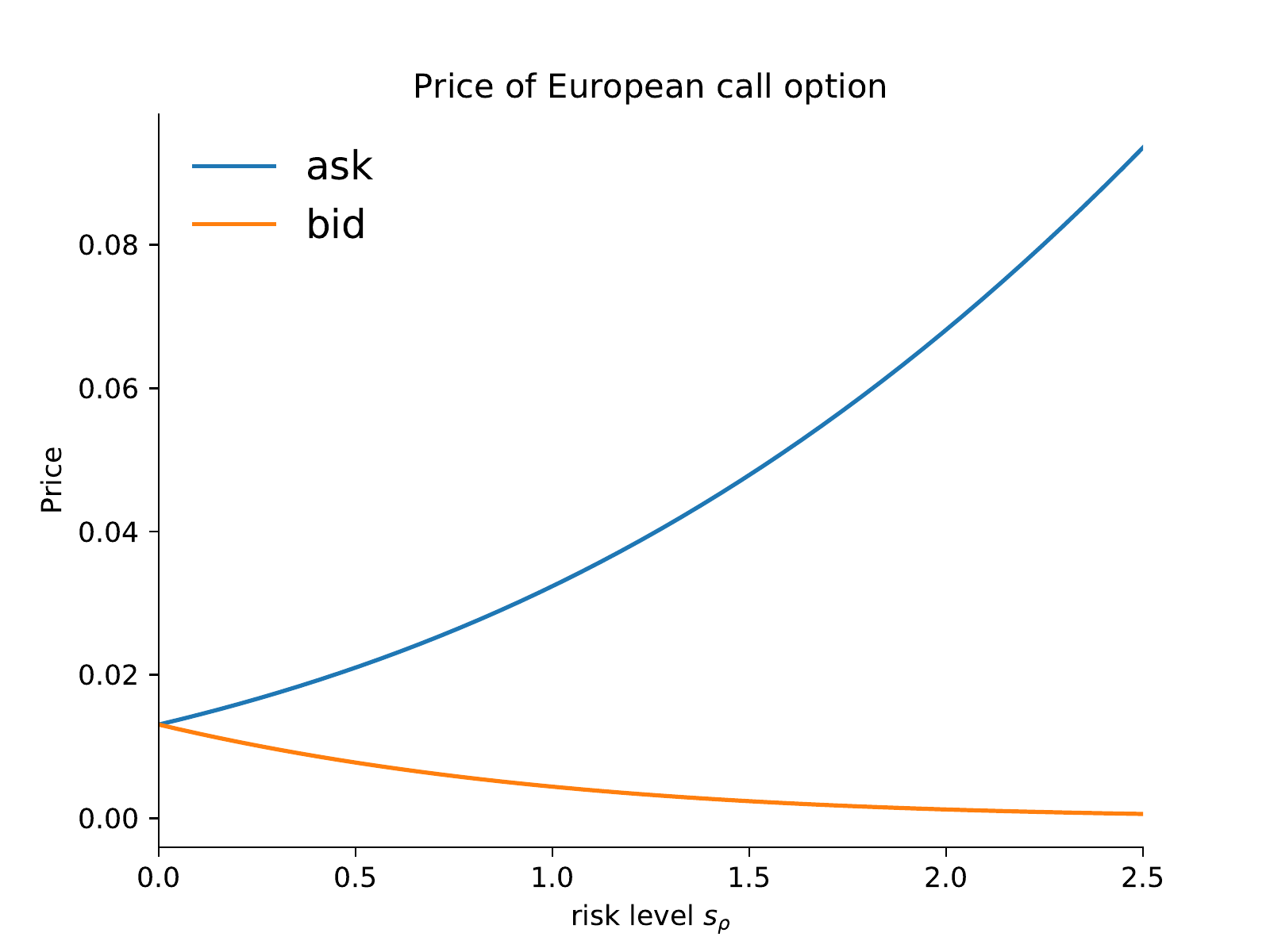}
\par\end{centering}
}\subfloat[European put option]{\begin{centering}
\includegraphics[viewport=0bp 0bp 461bp 307bp,clip,width=0.49\textwidth]{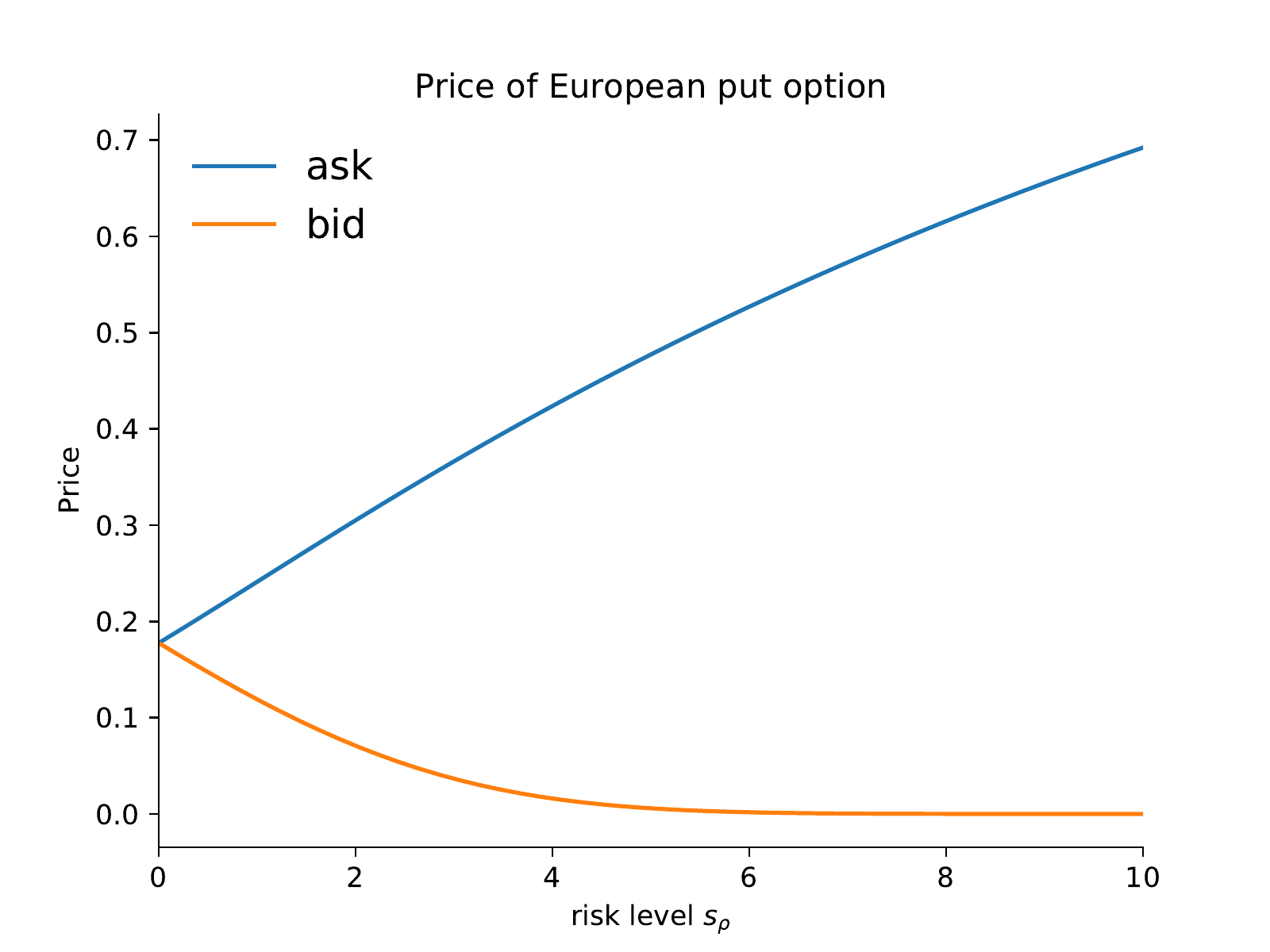}
\par\end{centering}
}\caption{\label{fig:spread}The bid-ask spread for varying risk level $s_{\rho}$}
\end{figure}
\par\end{center}
\begin{rem}
\noindent Figure~\ref{fig:spread} (as well as Figure~\ref{fig:3-1}
below) reveals opposite slopes of the bid and ask price at $s_{\rho}=0$,
the Black\textendash Scholes price. This reflects the opposing risk
assessment of the buying and selling investor at comparable risk aversion
coefficients. The value function~\eqref{eq:valueCall} is indeed
differentiable at $s_{\rho}=0$ and the sensitivity with respect to
the risk dividend $s_{\rho}\,\sigma$ relates to the classical Greek~$\varepsilon$
(or~$\psi$) for dividend paying models. 
\end{rem}

\subsection{\label{subsec:consistency}Consistency with discrete models}

We return to the binomial model with risk-averse probabilities from
Remark~\ref{rem:risk-binom}. The preceding discussions on divisibility
and the risk generator show that the risk level $\beta$ for the mean
semi-deviation risk measure needs to be proportional to 
\[
\sqrt{\Delta t}.
\]
Further recall the risk-neutral probabilities 
\[
p=\frac{e^{r\Delta t}-e^{-\sigma\sqrt{\Delta t}}}{e^{\sigma\sqrt{\Delta t}}-e^{-\sigma\sqrt{\Delta t}}}=\frac{1}{2}+\left(\frac{r}{2\sigma}-\frac{\sigma}{4}\right)\sqrt{\Delta t}+o(\Delta t)
\]
and hence the risk-averse probabilities in~\eqref{eq:15} satisfy
\[
\widetilde{p}=p(1-\beta\sqrt{\Delta t}(1-p))=\frac{1}{2}+\left(\frac{r-\frac{\beta\,\sigma}{2}}{2\sigma}-\frac{\sigma}{4}\right)\sqrt{\Delta t}+o(\Delta t).
\]
Thus replacing the interest rate $r$ by $r-\frac{\beta\,\sigma}{2}$
shows that under the nested mean semi-deviation the distribution for
the stock $S_{t}$ is
\[
S_{t}=S_{0}\exp\left\{ t\left(r-\frac{\beta\,\sigma}{2}-\frac{\sigma^{2}}{2}\right)+\sigma\,W_{t}\right\} .
\]

Recall from Lemma~\ref{lem:gauss} that $s_{\rho}=\frac{\beta}{\sqrt{2\pi}}$
for the mean semi-deviation of order $p=1$. However, the binomial
model converges to a process with dividends $\frac{\beta}{2}\sigma>s_{\rho}\,\sigma$.
The deviating scaling factors are in line with the discontinuity of
coherent risk measures with respect to convergence in distribution,
described in \citet[Theorem~4.1]{BaeuerleMueller2006-R}. The discussion
shows that adapting the risk level $\beta$ of the nested mean semi-deviation
leads to a well-defined limit in continuous time. 

In general, one may not expect that nesting conditional risk measures
leads to a well-defined risk measure in continuous time. \citet{Xin2011}
first observed that naively nesting the conditional Average Value-at-Risk
leads to an exponentially increasing upper bound and \citet{PichlerSchlotter2019}
extend this result to more general risk measures (see also \citet{Pichler2017-R}
for a collection of related inequalities).

The following proposition extends the discussion of the nested mean
semi-deviation to more general risk measures and provides the theoretical
connection between divisibility and convergence of risk-averse option
pricing models. 
\begin{prop}
Denote by $S^{n}$ the $n$-period binomial tree model~\eqref{eq:Binomial}
converging to a geometric Brownian motion for $n\to\infty$. Then
the risk-averse binomial model in Remark~\ref{rem:risk-binom} converges
if the family of nested risk measures is divisible.
\end{prop}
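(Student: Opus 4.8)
The plan is to reduce the risk-averse binomial recursion to an ordinary, risk-neutral binomial recursion under a \emph{distorted} up-probability, and then to invoke the de Moivre--Laplace limit already exploited in Remark~\ref{rem:risk-binom}. Writing $\Delta t=T/n$ and $t_i=i\,\Delta t$, the value of the option under the nested risk measure obeys the backward recursion $V^n(t_n,x)=\Psi(x)$ and $V^n(t_i,x)=e^{-r\Delta t}\,\rho_{\Delta t}\!\big(V^n(t_{i+1},x\,e^{\pm\sigma\sqrt{\Delta t}})\mid S^n_{t_i}=x\big)$ for the ask price (and with $-\rho_{\Delta t}(-\,\cdot\,)$ for the bid price). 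For the monotone payoffs of interest (European put and call), an induction over $i$, using conditional monotonicity of $\rho_{\Delta t}$ and the monotone transitions of the tree, shows that each $V^n(t_i,\cdot)$ is monotone in~$x$.

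Next I would evaluate a single step. A law-invariant coherent risk measure, restricted to a two-point law that takes its larger value $a$ with probability $q$ and its smaller value $b$ with probability $1-q$, acts by $\rho_{\Delta t}(Y)=b+(a-b)\,\phi_{\Delta t}(q)$, where $\phi_{\Delta t}\colon[0,1]\to[0,1]$ is a concave, nondecreasing distortion with $\phi_{\Delta t}(0)=0$, $\phi_{\Delta t}(1)=1$ and $\phi_{\Delta t}(q)\ge q$, determined by $\rho_{\Delta t}$ (this is the Kusuoka representation restricted to two-point laws). Since $V^n(t_{i+1},\cdot)$ is monotone, the recursion step is therefore exactly the conditional expectation of $V^n(t_{i+1},\cdot)$ under a distorted up-probability -- $\widetilde p_{\Delta t}:=\phi_{\Delta t}(p)$ or $\underline p_{\Delta t}:=1-\phi_{\Delta t}(1-p)$, according to the direction of the monotonicity and to whether the bid or the ask price is considered. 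Because $p$, and hence the distorted probability, is the same at every node of the homogeneous recombining tree, iterating gives
\[
V^n(0,S_0)=e^{-rT}\,\widetilde{\E}\,\Psi\!\big(\widetilde S^n_T\big),
\]
where $\widetilde S^n$ is the binomial tree~\eqref{eq:Binomial} with $p$ replaced by the constant distorted probability.

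It remains to prove that $\widetilde S^n_T$ converges in distribution. By the Lindeberg--Feller limit behind~\eqref{eq:18}, this holds if and only if the distorted probability tends to $\tfrac12$ and its deviation from $\tfrac12$, divided by $\sqrt{\Delta t}$, converges; the limit then fixes the log-drift of the limiting geometric Brownian motion. Divisibility supplies both facts. Condition~(2) of Definition~\ref{def:divisibility}, applied to the centred increment $S^n_{t+\Delta t}-\E[S^n_{t+\Delta t}\mid S^n_t]$ (of $L^p$-norm $O(\sqrt{\Delta t})$), forces $\rho_{\Delta t}$ of that increment to be $O(\Delta t)$, which in the two-point formula reads $2\phi_{\Delta t}(\tfrac12)-1=O(\sqrt{\Delta t})$; combined with $p\to\tfrac12$ and the (concavity) bound $\phi_{\Delta t}'\le2$ near $\tfrac12$, this yields $\widetilde p_{\Delta t},\underline p_{\Delta t}\to\tfrac12$. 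The \emph{rate} is controlled through the Gaussian condition~\eqref{eq:17-1}: comparing the increasing test functions $\operatorname{sgn}(u-\tfrac12)$ and $\Phi^{-1}(u)$ against the risk spectra of $\rho_{\Delta t}$ gives $0\le\phi_{\Delta t}(\tfrac12)-\tfrac12\le c\,\rho_{\Delta t}(W)$ with an absolute constant $c$, and a finer analysis of these spectra near the median forces the limit $\widehat s_\rho:=\lim_{\Delta t\downarrow0}2\big(\phi_{\Delta t}(\tfrac12)-\tfrac12\big)/\sqrt{\Delta t}$ to exist. Then $\widetilde S^n_T\Rightarrow S_0\exp\!\big\{(r\pm\widehat s_\rho\sigma-\tfrac{\sigma^2}{2})T+\sigma W_T\big\}$, and since $\Psi$ is continuous of at most linear growth while $\sup_n\E[(\widetilde S^n_T)^p]<\infty$, we obtain $V^n(0,S_0)\to e^{-rT}\,\widetilde{\E}\,\Psi(S_T)$. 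A general continuous payoff is handled by approximating $\Psi$ uniformly on compacts by differences of monotone Lipschitz functions, or, equivalently, by a Barles--Souganidis argument: the scheme is monotone, stable by condition~(2), consistent with the operator $\pm\widehat s_\rho|\sigma x\,V_x|$, and the limiting degenerate parabolic equation has a comparison principle, so subsequential limits are the unique viscosity solution and the whole sequence converges.

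The main obstacle is precisely this rate statement -- that divisibility forces $\big(\phi_{\Delta t}(\tfrac12)-\tfrac12\big)/\sqrt{\Delta t}$ to \emph{converge}, not merely to remain bounded. The binomial increment is a two-point law which, in contrast to~\eqref{eq:17-1}, cannot be coupled to a Gaussian at order $o(\sqrt{\Delta t})$ in $L^p$; this is exactly why the limiting coefficient $\widehat s_\rho$ is in general different from $s_\rho$ -- for the mean semi-deviation of order~$1$ one gets $\widehat s_\rho=\beta/2$ versus $s_\rho=\beta/\sqrt{2\pi}$, in agreement with Section~\ref{subsec:consistency} and with the distributional discontinuity of coherent risk measures recorded in \citet[Theorem~4.1]{BaeuerleMueller2006-R}. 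Turning the interplay of conditions~(1) and~(2) of Definition~\ref{def:divisibility} into the existence of this limit for the distortion $\phi_{\Delta t}$ near the median is the technical heart of the argument.
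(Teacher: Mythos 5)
Your reduction of the risk-averse recursion to an ordinary binomial recursion under a distorted up-probability is sound: for a two-point conditional law, translation equivariance, positive homogeneity and law invariance give $\rho_{\Delta t}(Y)=b+(a-b)\phi_{\Delta t}(q)$ with $\phi_{\Delta t}(q)=\rho_{\Delta t}(\one_{A})$, $P(A)=q$, and for monotone payoffs this is exactly the mechanism of Remark~\ref{rem:risk-binom}; your value $\widehat{s}_{\rho}=\beta/2\neq s_{\rho}=\beta/\sqrt{2\pi}$ for $\SD_{1,\beta\sqrt{\Delta t}}$ matches Section~\ref{subsec:consistency}. The genuine gap is the step you yourself call the technical heart: the claim that divisibility forces $\bigl(\phi_{\Delta t}(\tfrac12)-\tfrac12\bigr)/\sqrt{\Delta t}$ to \emph{converge}. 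This is asserted (``a finer analysis of these spectra near the median''), not proven, and it does not follow from Definition~\ref{def:divisibility}. Condition~(2) applied to $Y=\pm1$ with probability $\tfrac12$ gives only boundedness of that ratio, while condition~(1) constrains $\rho_{\Delta t}$ solely through its action on Gaussian laws, which leaves the action on two-point laws undetermined \textemdash{} that is precisely the content of your own $\beta/2$ versus $\beta/\sqrt{2\pi}$ discrepancy. Concretely, take two coherent deviation functionals, e.g.\ the upper semideviation and a rescaled $\AVaR$-deviation, normalized to coincide on $\mathcal{N}(0,1)$ but not on the symmetric two-point law, and set $\rho_{\Delta t}:=\E+\sqrt{\Delta t}\,D_{i(\Delta t)}$ with $i(\Delta t)$ alternating along $\Delta t=T/n$: both conditions of Definition~\ref{def:divisibility} hold with one $s_{\rho}$ and a uniform constant, yet $\bigl(\phi_{\Delta t}(\tfrac12)-\tfrac12\bigr)/\sqrt{\Delta t}$ oscillates, so no limit $\widehat{s}_{\rho}$ exists and your Lindeberg\textendash Feller (or Barles\textendash Souganidis consistency) step has no single limiting operator to aim at. Any repair needs an input beyond Definition~\ref{def:divisibility}, e.g.\ a divisibility-type limit imposed directly on (conditionally) two-point laws, or restriction to a fixed parametric family such as $\SD_{p,\beta\sqrt{\Delta t}}$. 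A smaller inaccuracy: concavity of $\phi_{\Delta t}$ is not automatic for general law-invariant coherent risk measures (the Kusuoka representation yields a supremum of concave distortions), though this is cosmetic next to the main issue.

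For comparison, the paper's own proof takes a different route and never introduces a distortion: it bounds $\rho_{\Delta t}(S^{n}_{\Delta t}-S^{n}_{0})$ by subadditivity against the corresponding increment of the limiting It\^o process, evaluates the latter through the risk generator of Proposition~\ref{prop:RGen}, and controls the coupling error with the $L^{p}$ bound of condition~(2). Your approach is more informative where it applies, since it identifies the limiting drift correction $\widehat{s}_{\rho}\,\sigma$ explicitly, but the missing convergence lemma above is exactly what your argument requires and what the divisibility conditions alone do not deliver.
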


\begin{proof}
Let $(\rho_{\Delta t})_{\Delta t}$ be a divisible family of risk
measures and denote by $X=(X_{t})_{t}$ the geometric Brownian motion.
As $X_{0}=S_{0}^{n}$ for all $n$ we have the following inequality,
\[
\lim_{n\to\infty}\rho_{\Delta t}\left(S_{\Delta t}^{n}-S_{0}^{n}\right)\leq\lim_{n\to\infty}\rho_{\Delta t}\left(S_{\Delta t}^{n}-X_{\Delta t}\right)+\rho_{\Delta t}\left(X_{\Delta t}-X_{0}\right).
\]
Because $(\rho_{\Delta t})_{\Delta t}$ is a divisible family of risk
measures Proposition~\ref{prop:RGen} shows that
\[
\rho_{\Delta t}\left(X_{\Delta t}-X_{0}\right)=c_{\rho}\cdot\Delta t+o(\Delta t).
\]
For the first term notice that $(S_{\Delta t}^{n}-X_{\Delta t})_{n}$
tends to zero in distribution and hence also converges in probability.
Moreover, $\left(S_{\Delta t}^{n}-X_{\Delta t}\right)_{n}$ is uniformly
bounded in $L^{p}$ and hence with divisibility and dominated convergence
\[
\lim_{n\to\infty}\rho_{\Delta t}\left(S_{\Delta t}^{n}-X_{\Delta t}\right)=0.
\]
It follows that
\[
\lim_{n\to\infty}\rho_{\Delta t}\left(S_{\Delta t}^{n}-S_{0}^{n}\right)=c_{\rho}\cdot\Delta t+o(\Delta t),
\]
which implies the existence of the limit of risk-averse binomial models
as in Remark~\ref{rem:risk-binom}.
\end{proof}

\subsection{\label{sec:US}Pricing of American options under risk}

The Black\textendash Scholes model allows explicit formulae for European
option prices in in the risk-averse setting. This is surprising given
the initial nonlinear PDE formulation in~\eqref{eq:PDE-bid} and~\eqref{eq:PDE-ask}.
Similarly we may reformulate the risk-averse American option pricing
problem and in what follows we introduce the risk-averse optimal stopping
problem for American put options and introduce the value functions.

Again we assume that the stock $S$ follows the geometric Brownian
motion~\eqref{eq:geom}. Here, the risk-averse bid price of an American
option is given by $\sup_{\tau\in[0,T]}\,-\rho^{0:\tau}\left[-e^{-r\tau}\,\Psi(S_{\tau})\right]$,
where $\Psi(\cdot)$ is the payoff function and the supremum is among
all stopping times with $\tau\in[0,T]$. The ask price is given by
$\sup_{\tau\in[0,T]}\,\rho^{0:\tau}\left[e^{-r\tau}\,\Psi(S_{\tau})\right]$.
We can further define the value functions
\[
V(t,x):=\sup_{\tau\in[t,T]}\,-\rho^{t:\tau}\left[-e^{-r(\tau-t)}\,\Psi(S_{\tau})\mid S_{t}=x\right]
\]
for the bid price and
\[
\widetilde{V}(t,x):=\sup_{\tau\in[t,T]}\,\rho^{t:\tau}\left[e^{-r(\tau-t)}\,\Psi(S_{\tau})\mid S_{t}=x\right]
\]
for the ask price. For brevity we only discuss the bid price for American
put options, the arguments for the ask price are analogous. By informally
extending the arguments from the risk-neutral setting to the risk-averse
setting we obtain the free boundary problem 
\begin{eqnarray}
V_{t}(t,x)+rxV_{x}(t,x)+\frac{\sigma^{2}x^{2}}{2}V_{xx}(t,x)-s_{\rho}\sigma x\left|V_{x}\right|= & rV(t,x) & \text{for }x\geq L(t),\label{eq: freeDiffu}\\
V(t,x)= & (K-x)_{+} & \text{for }0\leq x<L(t),\\
V_{x}(t,x)= & -1 & \text{for }x=L(t),\label{eq:freeBee}\\
V(T,x)= & (K-x)_{+}\nonumber \\
L(T)= & K\nonumber \\
\lim_{x\to\infty}V(t,x)= & 0 & \text{for }0\leq t\leq T\label{eq:8}
\end{eqnarray}
for the optimal exercise boundary $t\mapsto L(t)$. For an overview
on American options and free boundary problems in general we refer
to \citet{Peskier2006}. The following result follows with standard
arguments for American options. 
\begin{thm}
The value function 
\begin{equation}
V(t,x)=\sup_{\tau\in[t,T]}\,-\rho^{t:\tau}\left[-e^{-r(\tau-t)}(K-S_{\tau})_{+}\mid S_{t}=x\right]\label{eq:US-value}
\end{equation}
solves the free boundary problem~\eqref{eq: freeDiffu}\textendash \eqref{eq:8}. 
\end{thm}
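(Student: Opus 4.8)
The plan is to establish the result via the standard martingale / dynamic-programming characterisation of optimal stopping problems, adapted to the nested risk measure $\rho^{t:\tau}$. First I would observe that the bid-price value function $V$ in~\eqref{eq:US-value} dominates the payoff $(K-x)_+$ (take $\tau=t$) and that, by the dynamic programming principle for nested risk measures (Lemma~\ref{lem:DPP}, now with a supremum over stopping times as in Remark~\ref{rem:HJB}), $V$ satisfies the Bellman-type equation
\[
V(t,x)=\max\Bigl\{(K-x)_+,\ -\rho^{t:t+\Delta t}\bigl[-e^{-r\Delta t}V(t+\Delta t,S_{t+\Delta t})\mid S_t=x\bigr]\Bigr\}
\]
for all small $\Delta t>0$. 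This splits the state space into the \emph{continuation region} $\mathcal{C}=\{(t,x): V(t,x)>(K-x)_+\}$ and the \emph{stopping region}, and one defines $L(t):=\inf\{x\ge 0: V(t,x)>(K-x)_+\}$, so that $\{x>L(t)\}$ is (the relevant part of) $\mathcal{C}$; the boundary conditions $V(T,x)=(K-x)_+$ and $L(T)=K$ follow from the terminal definition, while $\lim_{x\to\infty}V(t,x)=0$ follows from monotonicity in $x$ of the put payoff together with the fact that on $\{x\to\infty\}$ immediate exercise is worthless and the discounted geometric Brownian motion stays large.

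Next I would show that inside the continuation region $\mathcal{C}$ the value function solves the PDE~\eqref{eq: freeDiffu}. This is exactly Theorem~\ref{thm:ValuePDE} applied locally: on $\mathcal{C}$ it is never optimal to stop before the next instant, so the dynamic programming equation reduces to $V(t,x)=-\rho^{t:t+\Delta t}[-e^{-r\Delta t}V(t+\Delta t,S_{t+\Delta t})\mid S_t=x]$, i.e.\ the fixed-horizon risk-averse pricing equation, and rearranging as in~\eqref{eq:DPP2} and letting $\Delta t\to0$ yields~\eqref{eq: freeDiffu} with the sign of the risk term appropriate for the bid price; here one invokes the regularity hypothesis ($V\in C^2$ on $\mathcal{C}$ and $\sigma\,x\,V_x$ Hölder continuous) exactly as in Proposition~\ref{prop:RGen}. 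In the stopping region $V$ equals $(K-x)_+$ by definition. The value-matching condition at $x=L(t)$ is immediate from continuity of $V$, and the \emph{smooth-fit} (high-contact) condition $V_x(t,L(t))=-1$ in~\eqref{eq:freeBee} is the one genuinely delicate point: I would argue it by the classical contradiction argument — if $V_x(t,L(t)^+)>-1$ then perturbing the exercise boundary slightly inward strictly increases the value (one gains from waiting), contradicting optimality of $L$, while $V_x\ge -1$ always holds because $V\ge(K-x)_+$ with equality at $L(t)$; the nonlinear term $-s_\rho\sigma x|V_x|$ does not obstruct this local comparison since $V_x<0$ near the boundary, so $|V_x|=-V_x$ and the operator is locally a linear (degenerate-)elliptic operator with a shifted drift $r\to r-s_\rho\sigma\,\mathrm{sgn}(\dots)$, to which the usual smooth-fit arguments (e.g.\ \citet{Peskier2006}) transfer verbatim.

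Finally I would close the loop: any $C^{1,2}$ solution $(\widehat V,\widehat L)$ of the free boundary system~\eqref{eq: freeDiffu}--\eqref{eq:8}, together with the supermartingale-type property that $s\mapsto e^{-r(s-t)}\widehat V(s,S_s)$ is a $\rho$-supermartingale under nesting (this uses~\eqref{eq: freeDiffu} as an \emph{inequality} $\mathcal{R}_\rho \widehat V-r\widehat V\le0$ on all of $\{x\ge0\}$, which holds since on the stopping region the generator applied to $(K-x)_+$ stays nonpositive for $x<L(t)$ — a short sign check using $r>0$), forces $\widehat V\ge V$; the reverse inequality comes from optimality of $\tau^*=\inf\{s\ge t: S_s\le \widehat L(s)\}$, for which $e^{-r(s-t)}\widehat V(s\wedge\tau^*,S_{s\wedge\tau^*})$ is a $\rho$-martingale by~\eqref{eq: freeDiffu}. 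Hence $V=\widehat V$ and $V$ solves the system. The main obstacle, as noted, is the rigorous justification of the smooth-fit condition in the presence of the nonlinear term and the verification that the nested-risk supermartingale machinery (needed to run the verification argument) is available — but since on the relevant region $V_x$ does not change sign the nonlinearity collapses to a drift modification, so I expect the classical optimal-stopping template of \citet{Peskier2006} to carry through with only cosmetic changes.
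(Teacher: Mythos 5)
Your proposal is correct and takes essentially the same route as the paper, which offers no detailed proof but asserts the theorem ``follows with standard arguments for American options'' and then justifies this by precisely the observation you rely on: since $V_x\le 0$ for the put, the nonlinear term $-s_\rho\sigma x\left|V_x\right|$ collapses to a drift modification $r\mapsto r+s_\rho\sigma$ (the paper writes the term as $\inf_{y\in[-1,1]}$ of linear operators, attained at $y=-1$), reducing the problem to the classical dividend-adjusted American put treated in the optimal-stopping literature the paper cites. Your sketch merely spells out that standard machinery (dynamic programming, continuation region, smooth fit, verification via the nested-risk super/martingale property) in more detail than the paper does.
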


Similarly to European options, risk-aversion reduces to a modification
of the drift term and the standard American put option model applies
for an underlying stock with risk dividends. To this end notice that
\begin{align*}
V_{t}(t,x)+rxV_{x}(t,x) & +\frac{\sigma^{2}x^{2}}{2}V_{xx}(t,x)-s_{\rho}\sigma x\left|V_{x}\right|\\
=\inf_{y\in[-1,1]} & \left\{ V_{t}(t,x)+\left(r-s_{\rho}\sigma y\right)xV_{x}(t,x)+\frac{\sigma^{2}x^{2}}{2}V_{xx}(t,x)\right\} 
\end{align*}
provided that $x\geq L(t)$. The American option is not exercised
and the same arguments as for the European options show that the infimum
over all constraints is attained at $y=-1$. The equation~\eqref{eq: freeDiffu}
is thus equal to
\[
V_{t}(t,x)+\left(r+s_{\rho}\sigma\right)xV_{x}(t,x)+\frac{\sigma^{2}x^{2}}{2}V_{xx}(t,x)=r\,V(t,x)\qquad\text{for }x\geq L(t).
\]
Consequently we deduce that the value function
\[
V(t,x):=\sup_{\tau\in[t,T]}\,\E\left[e^{-r(\tau-t)}\Psi\left(S_{\tau}\right)\mid S_{t}=x\right]
\]
solves the free boundary problem~\eqref{eq: freeDiffu}\textendash \eqref{eq:8},
where the state process is given by
\begin{align*}
\mathrm{d}S_{s} & =\left(r+s_{\rho}\,\sigma\right)S_{s}\,\mathrm{d}s+\sigma\,S_{s}\,\mathrm{d}W_{s}
\end{align*}
for a risk-loaded interest interest rate.

\subsection*{Numerical illustration}

Consider the geometric Brownian motion
\begin{align*}
\mathrm{d}S_{t} & =0.03S_{t}\,\mathrm{d}t+0.15S_{t}\,\mathrm{d}W_{t},\qquad0<t\leq1,\\
S_{0} & =1.
\end{align*}
The strike price in the next Figure~\ref{fig:3} is $K=1$. We consider
the optimal stopping region for different risk levels $s_{\rho}$.
A risk-averse option buyer (bid price) would generally exercise earlier,
he accepts less profits due to his risk aversion. Compared with the
risk-neutral investor, the risk aware option buyer prefers exercising
prematurely rather than delayed exercise.

The reverse is true for the option holder (ask price), where the investor
waits longer.

\begin{figure}[H]
\centering{}\subfloat[bid price]{\begin{centering}
\includegraphics[viewport=10bp 0bp 420bp 307bp,clip,width=0.49\textwidth]{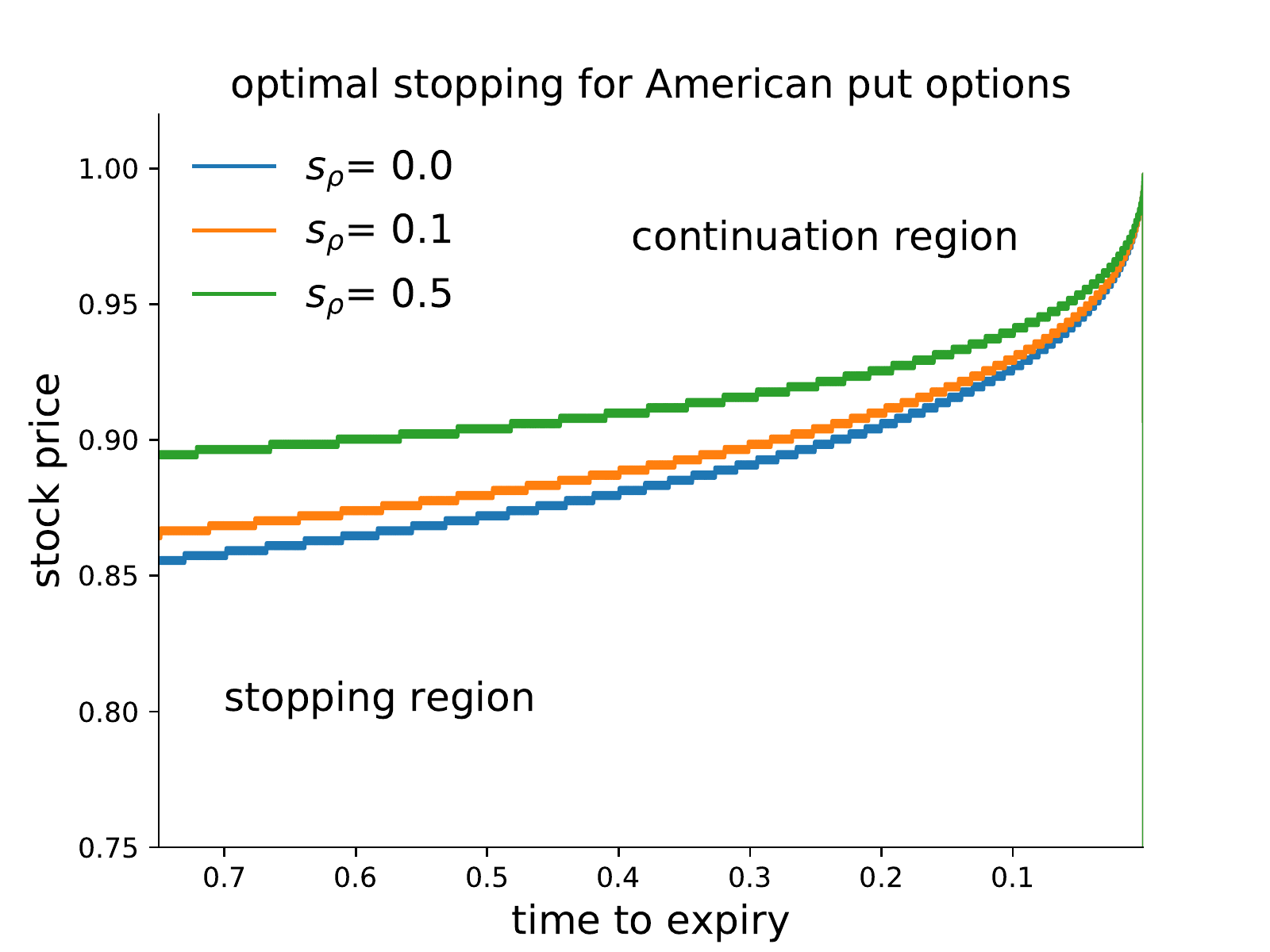}
\par\end{centering}
}\subfloat[ask price]{\begin{centering}
\includegraphics[viewport=10bp 0bp 420bp 307bp,clip,width=0.49\textwidth]{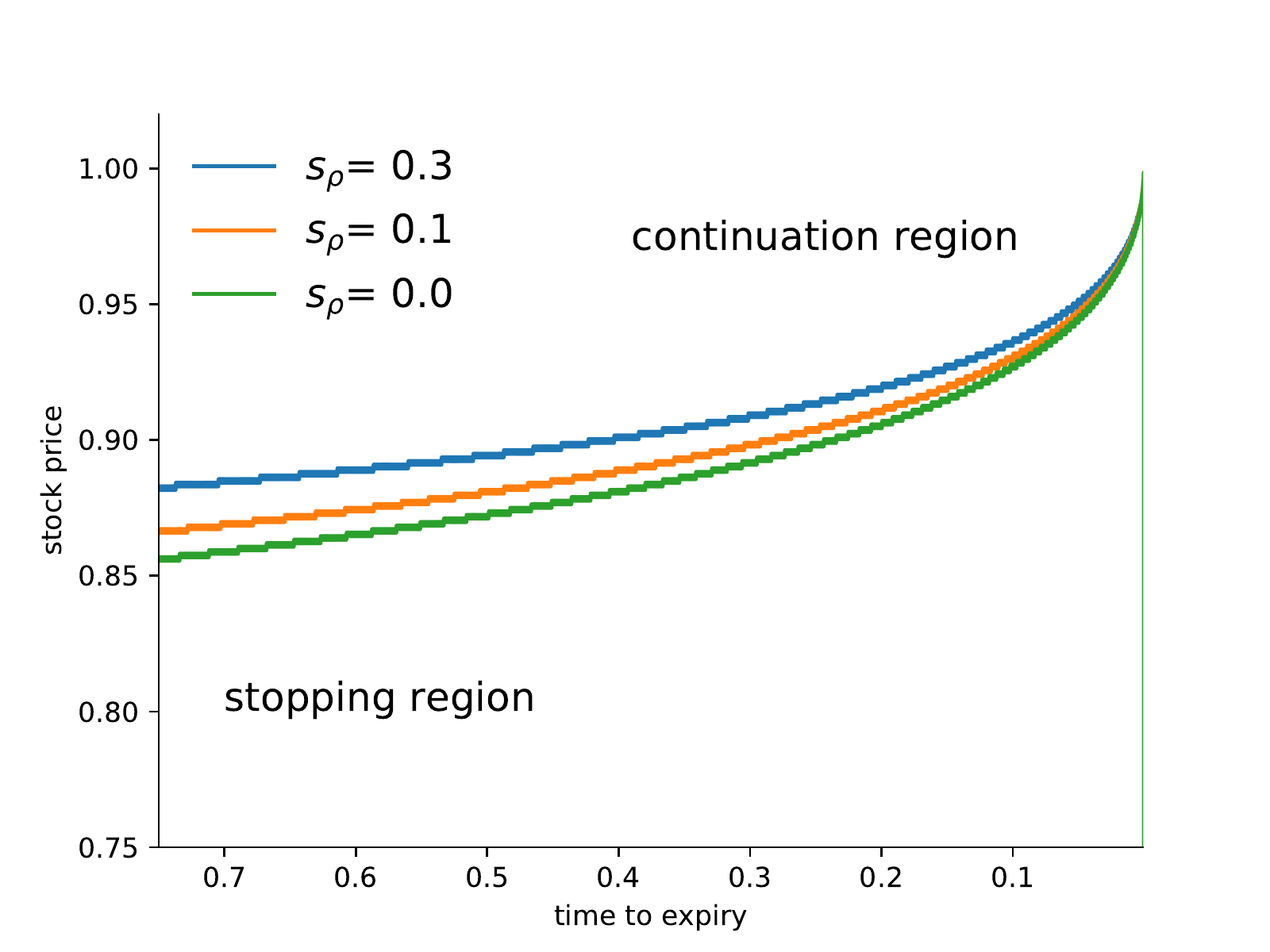}
\par\end{centering}
}\caption{\label{fig:3}optimal stopping regions for put options}
\end{figure}

In the risk-neutral case it is never optimal to exercise an American
call option before expiry. However, this is only the case if the interest
rate exceeds the dividends of the underlying asset (see, for instance,
\citet[Chapter 8.5]{Shreve2010} for details). As nested risk measures
modify the interest rate it may be optimal to exercise the call option
early. Figure~\ref{fig:stopCall} shows the optimal exercise boundary
for the risk-averse call option with strike $K=1$ and initial value
$S_{0}=1$.

\begin{figure}[H]
\centering{}\includegraphics[viewport=0bp 0bp 413bp 307bp,clip,scale=0.5]{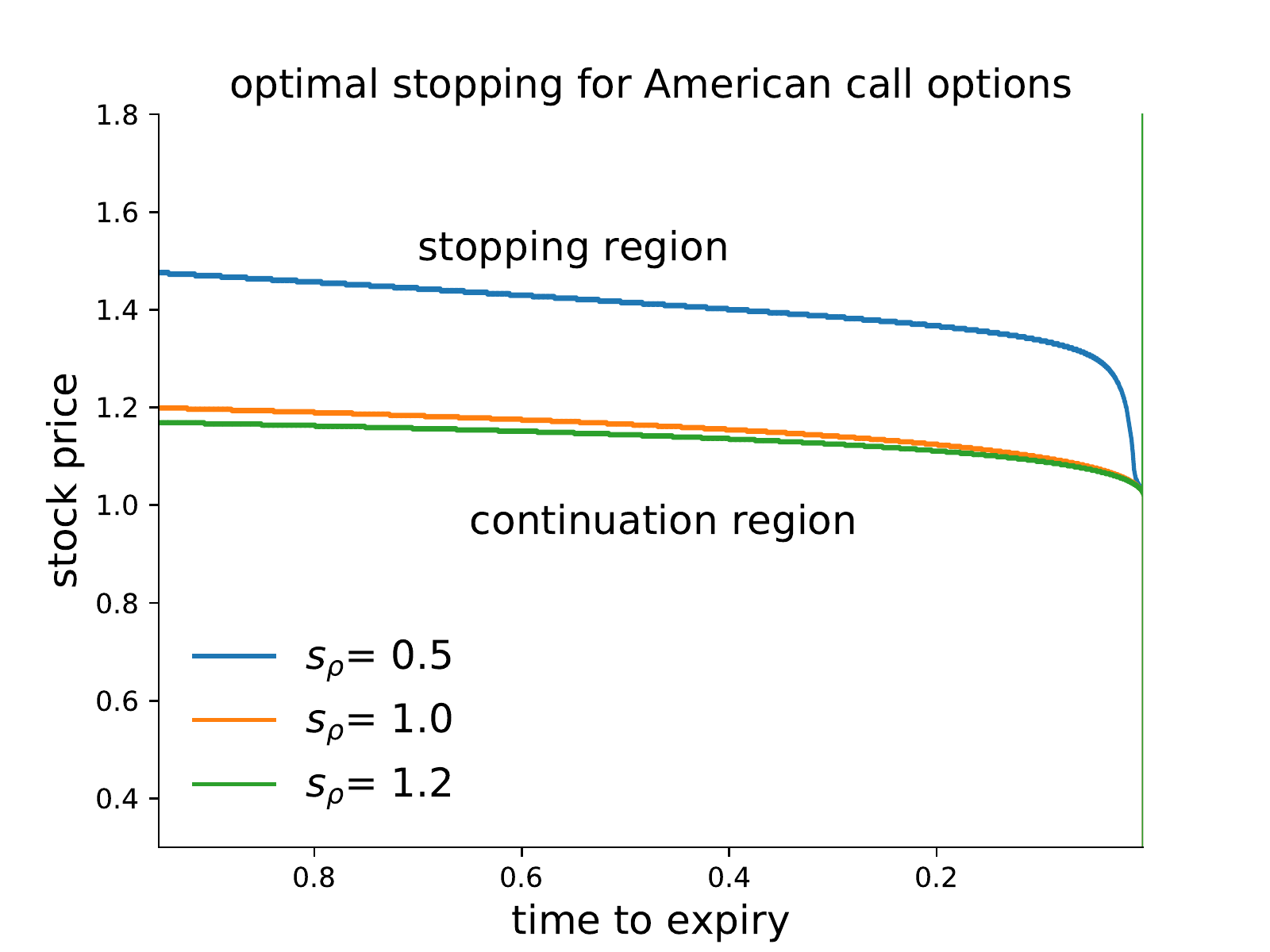}\caption{\label{fig:stopCall}optimal stopping regions for different risk-levels
(call option)}
\end{figure}

Below we show the bid-ask spread for American options.

\begin{figure}[H]
\centering{}\subfloat[American call option]{\begin{centering}
\includegraphics[viewport=10bp 0bp 420bp 307bp,clip,width=0.49\textwidth]{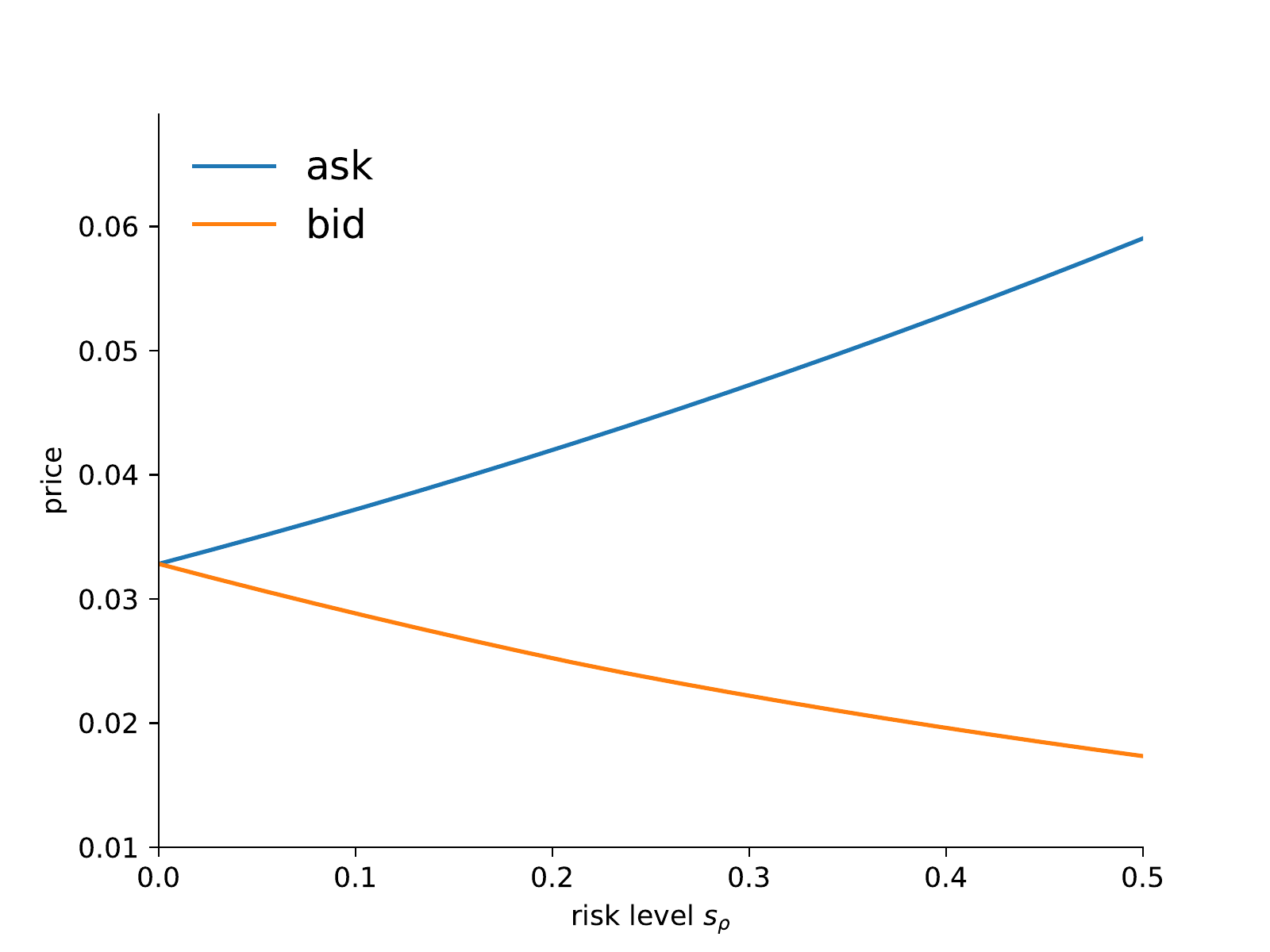}
\par\end{centering}
}\subfloat[American put option]{\begin{centering}
\includegraphics[viewport=10bp 0bp 420bp 307bp,clip,width=0.49\textwidth]{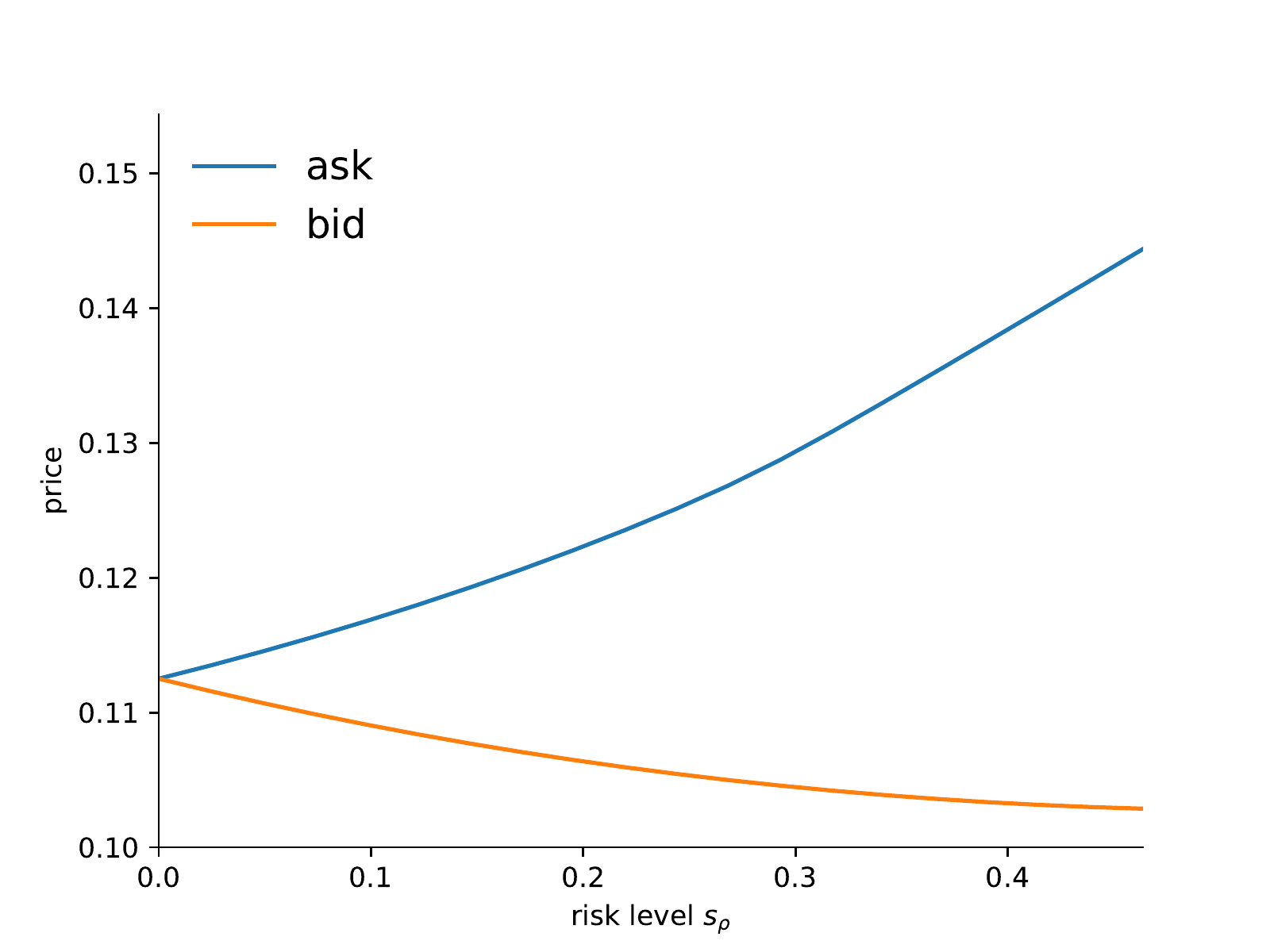}
\par\end{centering}
}\caption{\label{fig:3-1}risk-averse American option values}
\end{figure}

\section{The Merton problem\label{sec:Merton}}

The preceding sections demonstrate that classical option pricing models
generalize naturally to a risk-averse setting by employing nested
risk measures. In what follows we demonstrate that the classical Merton
problem, which allows an explicit solution in specific situations,
as well allows extending to the risk-averse situation. 

Consider a risk-less bond $B$ satisfying the ordinary differential
equation $\mathrm{d}B_{t}=r\,B_{t}\,\mathrm{d}t$ and a risky asset~$S$
driven by the stochastic differential equation 
\[
\mathrm{d}S_{t}=\mu S_{t}\,\mathrm{d}t+\sigma S_{t}\,\mathrm{d}W_{t}.
\]
We are interested in the optimal fraction $\pi_{t}$ of the total
wealth $w_{t}$ one should invest in the risky asset. The wealth process
is 
\[
\mathrm{d}w_{t}=\left[\left(\pi_{t}\mu+(1-\pi_{t})r\right)w_{t}-c_{t}\right]\mathrm{d}t+\pi_{t}\,\sigma\,w_{t}\,\mathrm{d}W_{t},
\]
where $c_{t}$ is the rate of consumption. Following Merton we employ
the power utility function $u(x)=\frac{x^{1-\gamma}}{1-\gamma}$ with
parameter $\gamma\ge0$ and $\gamma\neq1$ and consider the risk-averse
objective function
\begin{equation}
R(t,x):=\sup_{\pi,c}\,-\rho^{t:T}\left(-\int_{t}^{T}u(c_{s})\mathrm{d}s-\epsilon^{\gamma}\,u(w_{T})\mid w_{t}=x\right),\label{eq:optimalControl}
\end{equation}
where $\epsilon$ parameterizes the desired payout at terminal time.
Surprisingly, $R$ has a closed form solution and the optimal portfolio
allocation of the risk averse investor is 
\[
\pi^{*}=\max\left(\frac{\mu-r-s_{\rho}\,\sigma}{\sigma^{2}\,\gamma},0\right).
\]
We observe again that risk aversion leads to a modified drift term
$r+s_{\rho}\sigma$ in place of $r$. The optimal portfolio allocation
$\pi^{*}$ is a decreasing function of $s_{\rho}$. This is in line
with the usual economic perception, as increasing risk-aversion corresponds
to less investments into the risky asset. The optimal consumption
is given by
\[
c_{t}^{*}(x)=\frac{x\,\nu}{1+(\nu\,\epsilon-1)e^{-\nu(T-t)}},
\]
where $\nu$ is a constant depending on the model parameters. Consumption
generally increases with risk aversion as the value of immediate consumption
offsets the present value of uncertain wealth in the future.

In Remark~\ref{rem:HJB} we formally extended the results of Proposition~\ref{prop:RGen}
to objective functions of the form~\eqref{eq:optimalControl}. Based
on this we now consider the Hamilton\textendash Jacobi\textendash Bellman
equation 
\begin{align}
0 & =\max_{\pi,c}\,\left[R_{t}+\left[\left(\pi_{t}\,\mu+(1-\pi_{t})r\right)x-c_{t}\right]R_{x}+\frac{\sigma^{2}\pi^{2}x^{2}}{2}R_{xx}+u(c_{t})-s_{\rho}\left|\sigma\,\pi_{t}\,x\,R_{x}\right|\right]\label{eq:HJB}
\end{align}
with terminal condition $R(T,x)=\frac{\epsilon^{\gamma}}{1-\gamma}x^{1-\gamma}$.
In what follows we derive the optimal value function $R$ and verify
the optimal portfolio allocation $\pi^{*}$ and optimal consumption
$c^{*}$ given above.

The Hamilton\textendash Jacobi\textendash Bellman equation~\eqref{eq:HJB}
allows for explicit optimal controls outlined in the following proposition.
\begin{prop}
In the risk-averse setting, the optimal controls are given by
\begin{align*}
\pi_{t}^{*} & (x)=-\frac{(\mu-r)R_{x}}{\sigma^{2}xR_{xx}}+\frac{s_{\rho}\sigma\left|R_{x}\right|}{\sigma^{2}xR_{xx}},\qquad c_{t}^{*}(x)=R_{x}^{-\frac{1}{\gamma}}.
\end{align*}
The Hamilton-Jacobi-Bellman equation~\eqref{eq:HJB} rewrites as
\begin{align}
0 & =R_{t}-\frac{\left((\mu-r)^{2}+s_{\rho}^{2}\sigma^{2}\right)R_{x}^{2}}{2\sigma^{2}R_{xx}}+\frac{s_{\rho}R_{x}\left|R_{x}\right|}{\sigma R_{xx}}+rxR_{x}+\frac{\gamma}{1-\gamma}R_{x}^{\frac{\gamma-1}{\gamma}},\label{eq:HJB-2}\\
R(T,x) & =\frac{\epsilon^{\gamma}}{1-\gamma}x^{1-\gamma}.\nonumber 
\end{align}
\end{prop}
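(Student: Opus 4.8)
The plan is to prove both assertions by a pointwise maximisation of the Hamiltonian in~\eqref{eq:HJB} over the controls $(\pi,c)$ at each fixed $(t,x)$ with $x>0$. Writing the bracket in~\eqref{eq:HJB} as
\[
R_t+rxR_x+\Bigl(\pi(\mu-r)xR_x+\tfrac{\sigma^2\pi^2x^2}{2}R_{xx}-s_\rho\sigma x\,|R_x|\,|\pi|\Bigr)+\bigl(u(c)-cR_x\bigr),
\]
the maximisation decouples into an unconstrained problem in $c$ and one in $\pi$. Throughout I use the natural regularity $R_x>0>R_{xx}$ of the power-utility value function (transparent from the separable solution constructed afterwards); it renders the Hamiltonian concave in $(\pi,c)$, the only non-smoothness being the $|\pi|$ term.

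First I treat consumption. The map $c\mapsto u(c)-cR_x$ is strictly concave with interior stationary point characterised by $u'(c)=R_x$; for $u(x)=\frac{x^{1-\gamma}}{1-\gamma}$ this reads $c^{-\gamma}=R_x$, hence $c_t^*(x)=R_x^{-1/\gamma}$. Substituting back, $u(c^*)-c^*R_x=\frac{1}{1-\gamma}R_x^{\frac{\gamma-1}{\gamma}}-R_x^{\frac{\gamma-1}{\gamma}}=\frac{\gamma}{1-\gamma}R_x^{\frac{\gamma-1}{\gamma}}$, which is the last term of~\eqref{eq:HJB-2}.

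The main obstacle is the maximisation in $\pi$, because of the kink of $|\pi|$ at $0$. I would handle it exactly as in the American-option section: linearise $-s_\rho\sigma x\,|R_x|\,|\pi|=\inf_{y\in[-1,1]}\bigl(-s_\rho\sigma\,y\,xR_x\,\pi\bigr)$ (using $\sigma,x,R_x>0$) and exchange $\sup_\pi$ with $\inf_y$, which is legitimate since the inner expression is affine in $y$ on the compact interval $[-1,1]$ and quadratic-concave in $\pi$. For fixed $y$ the inner supremum is a completed square, attained at $\pi(y)=-\frac{(\mu-r-s_\rho\sigma y)R_x}{\sigma^2xR_{xx}}$ with value $-\frac{(\mu-r-s_\rho\sigma y)^2R_x^2}{2\sigma^2R_{xx}}$. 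Minimising $(\mu-r-s_\rho\sigma y)^2$ over $[-1,1]$ gives $y^*=1$ when $\mu-r\ge s_\rho\sigma$, whence $\pi_t^*(x)=-\frac{(\mu-r)R_x}{\sigma^2xR_{xx}}+\frac{s_\rho\sigma|R_x|}{\sigma^2xR_{xx}}$ as asserted; when $|\mu-r|<s_\rho\sigma$ the inner minimum is $0$, so $\pi^*=0$, which is the source of the $\max(\cdot,0)$ in the closed-form allocation. Equivalently, one may argue directly that the $\pi$-part is concave, that $\pi<0$ is strictly dominated once $\mu-r\ge-s_\rho\sigma$, and solve the one-sided first-order condition on $\{\pi\ge0\}$.

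Finally I substitute the optimal controls back into~\eqref{eq:HJB}. Collecting the $\pi$-independent drift $R_t+rxR_x$, the consumption term $\frac{\gamma}{1-\gamma}R_x^{\frac{\gamma-1}{\gamma}}$ and the optimal $\pi$-contribution $-\frac{\bigl((\mu-r)R_x-s_\rho\sigma|R_x|\bigr)^2}{2\sigma^2R_{xx}}$, then expanding this square with $|R_x|^2=R_x^2$, turns the right-hand side of~\eqref{eq:HJB} into equation~\eqref{eq:HJB-2}; the terminal condition $R(T,x)=\frac{\epsilon^\gamma}{1-\gamma}x^{1-\gamma}$ is unchanged. Beyond the $\sup$--$\inf$ exchange and the case distinction at the kink, every step is the completion of squares familiar from the risk-neutral Merton problem.
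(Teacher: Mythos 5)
Your overall route is the natural one, and since the paper states this proposition without any proof there is nothing to compare it against directly: the decoupled pointwise maximisation, the consumption first-order condition $c^{-\gamma}=R_x$ giving the term $\frac{\gamma}{1-\gamma}R_x^{\frac{\gamma-1}{\gamma}}$, the minimax linearisation of the kink $|\pi|$ over $y\in[-1,1]$ (mirroring the American-option section), and the completion of the square yielding the optimal $\pi$-contribution $-\frac{\bigl((\mu-r)R_x-s_\rho\sigma|R_x|\bigr)^2}{2\sigma^2R_{xx}}$ are all correct, and your case distinction at the kink is in fact more careful than the proposition itself (which implicitly assumes the interior regime $\mu-r\ge s_\rho\sigma$, $R_x>0$).

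The gap is in the last step. Expanding your own expression gives
\begin{equation*}
-\frac{\bigl((\mu-r)R_x-s_\rho\sigma|R_x|\bigr)^2}{2\sigma^2R_{xx}}
=-\frac{\bigl((\mu-r)^2+s_\rho^2\sigma^2\bigr)R_x^2}{2\sigma^2R_{xx}}
+\frac{(\mu-r)\,s_\rho\,R_x|R_x|}{\sigma R_{xx}},
\end{equation*}
whereas the displayed equation~\eqref{eq:HJB-2} carries the cross term $\frac{s_\rho R_x|R_x|}{\sigma R_{xx}}$ without the factor $(\mu-r)$; the two agree only if $\mu-r=1$. So the claim that substitution ``turns the right-hand side of~\eqref{eq:HJB} into equation~\eqref{eq:HJB-2}'' does not hold as written, and your proof therefore does not establish the statement as printed. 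The evidence strongly suggests the missing factor is an error in the paper itself (it propagates into the constant $\nu$ of Theorem~\ref{thm:Merton}, which should reduce to the classical Merton formula with excess return $\mu-r-s_\rho\sigma$, i.e.\ the compact form $0=R_t-\frac{(\mu-r-s_\rho\sigma)^2R_x^2}{2\sigma^2R_{xx}}+rxR_x+\frac{\gamma}{1-\gamma}R_x^{\frac{\gamma-1}{\gamma}}$ for $R_x>0$), but a correct write-up must either derive the printed equation or explicitly flag and correct the discrepancy; asserting the match silently is the step that fails.
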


The preceding proposition derives first order conditions for the fraction
$\pi_{t}^{*}$ and consumption rate $c_{t}^{*}$. Employing the Hamilton\textendash Jacobi\textendash Bellman
equations we obtain nonlinear second order partial differential equations
for the optimally controlled value function.
\begin{thm}[Solution of the risk-averse Merton problem]
\label{thm:Merton}The PDE~\eqref{eq:HJB-2} has the explicit solution
\[
R(t,x)=\left(\frac{1+(\nu\,\epsilon-1)e^{-\nu(T-t)}}{\nu}\right)^{\gamma}\frac{x^{1-\gamma}}{1-\gamma},
\]
where $\nu:=-r\frac{1-\gamma}{\gamma}-\frac{1-\gamma}{\gamma^{2}}\left(\frac{\left((\mu-r)^{2}+s_{\rho}^{2}\sigma^{2}\right)}{2\sigma^{2}}-\frac{s_{\rho}}{\sigma}\right)$.
Moreover, the optimal controls are 
\begin{alignat*}{1}
\pi^{*} & =\max\left(\frac{(\mu-r)-s_{\rho}\sigma}{\sigma^{2}\gamma},0\right)\text{ and}\\
c_{t}^{*}(x) & =\frac{x\,\nu}{1+(\nu\,\epsilon-1)e^{-\nu(T-t)}}.
\end{alignat*}
\end{thm}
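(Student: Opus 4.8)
The plan is to verify the stated formula for $R$ directly by a separation--of--variables ansatz that reduces the nonlinear PDE~\eqref{eq:HJB-2} to a scalar linear ODE, and then to read the optimal controls off the first--order expressions furnished by the preceding proposition. Since the paper already derives~\eqref{eq:HJB-2} together with the control formulae $\pi_t^{*}=-\frac{(\mu-r)R_x}{\sigma^{2}xR_{xx}}+\frac{s_{\rho}\sigma|R_x|}{\sigma^{2}xR_{xx}}$ and $c_t^{*}=R_x^{-1/\gamma}$, the work is essentially an explicit solve plus substitution.

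First I would, motivated by the terminal condition $R(T,x)=\frac{\epsilon^{\gamma}}{1-\gamma}x^{1-\gamma}$, substitute $R(t,x)=g(t)^{\gamma}\,\frac{x^{1-\gamma}}{1-\gamma}$ for an unknown function $g$ with $g(T)=\epsilon$. Differentiating gives $R_x=g^{\gamma}x^{-\gamma}$ (hence $R_x>0$ and $|R_x|=R_x$ for $x>0$, $g>0$), $R_{xx}=-\gamma\,g^{\gamma}x^{-\gamma-1}$, $R_t=\frac{\gamma}{1-\gamma}\,g^{\gamma-1}g'\,x^{1-\gamma}$ and $R_x^{\frac{\gamma-1}{\gamma}}=g^{\gamma-1}x^{1-\gamma}$; in particular $\frac{R_x^{2}}{R_{xx}}=-\frac{1}{\gamma}g^{\gamma}x^{1-\gamma}$. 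Inserting these into~\eqref{eq:HJB-2}, every term shares the common factor $g^{\gamma-1}x^{1-\gamma}$, and after cancelling it the equation collapses to
\[
\tfrac{\gamma}{1-\gamma}\,g'(t)+\Bigl(r+\tfrac{1}{\gamma}\bigl(\tfrac{(\mu-r)^{2}+s_{\rho}^{2}\sigma^{2}}{2\sigma^{2}}-\tfrac{s_{\rho}}{\sigma}\bigr)\Bigr)g(t)+\tfrac{\gamma}{1-\gamma}=0,
\]
i.e.\ $g'(t)=\nu\,g(t)-1$ with $\nu$ exactly the constant stated in the theorem. This linear ODE with terminal value $g(T)=\epsilon$ has the elementary solution $g(t)=\frac{1}{\nu}+\bigl(\epsilon-\frac{1}{\nu}\bigr)e^{-\nu(T-t)}=\frac{1+(\nu\epsilon-1)e^{-\nu(T-t)}}{\nu}$, which reproduces the claimed $R$; evaluating at $t=T$ recovers the terminal condition, and on the parameter ranges for which $g$ stays positive it justifies the replacement $|R_x|=R_x$ a posteriori. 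The controls then follow by plugging in: $c_t^{*}(x)=R_x^{-1/\gamma}=x/g(t)=\frac{x\nu}{1+(\nu\epsilon-1)e^{-\nu(T-t)}}$, and since $\frac{R_x}{xR_{xx}}=-\frac{1}{\gamma}$ with $|R_x|=R_x$, the expression for $\pi_t^{*}$ reduces to $\frac{(\mu-r)-s_{\rho}\sigma}{\sigma^{2}\gamma}$.

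The main point requiring care — the only genuine obstacle beyond bookkeeping — is the non--smoothness of the $\pi$--Hamiltonian at $\pi=0$ produced by the term $s_{\rho}|\sigma\,\pi\,x\,R_x|$, which is the reason the stated $\pi^{*}$ carries a $\max(\cdot,0)$. The interior critical point $\frac{(\mu-r)-s_{\rho}\sigma}{\sigma^{2}\gamma}$ is obtained only on the branch $\pi\ge 0$ (where the absolute value opens as $+\sigma\pi xR_x$); when this number is negative one must instead inspect the one--sided derivatives of the concave--in--$\pi$ objective at $\pi=0$: the right derivative equals $((\mu-r)-s_{\rho}\sigma)xR_x<0$ and, under $\mu\ge r$ and $s_{\rho}\ge0$, the left derivative is $\ge 0$, so the maximizer is $\pi^{*}=0$. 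I would also note, for completeness, that a standard verification argument for the controlled diffusion (along the lines of \citet{Fleming2006}, using that the constructed $R$ is $C^{1,2}$ and that $(\pi^{*},c^{*})$ is admissible) upgrades this classical solution of~\eqref{eq:HJB} to the optimal value function in~\eqref{eq:optimalControl}.
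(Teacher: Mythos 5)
Your proposal is correct and follows essentially the same route as the paper: the separation ansatz $R(t,x)=g(t)^{\gamma}\frac{x^{1-\gamma}}{1-\gamma}$, reduction of~\eqref{eq:HJB-2} to the linear ODE $g'=\nu g-1$ with $g(T)=\epsilon$, and reading the controls off the first-order expressions (your derivation in fact gives the linear ODE cleanly, whereas the paper's displayed~\eqref{eq:ODE1} carries a spurious $f^{\gamma}$ factor that its stated solution ignores). Your additional discussion of the kink of the $\pi$-Hamiltonian at $\pi=0$, justifying the $\max(\cdot,0)$, and the remark on a verification argument go slightly beyond the paper's proof but are consistent with it.
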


\begin{proof}
We recall the PDE~\eqref{eq:HJB-2},
\begin{align*}
0 & =R_{t}-\frac{\left((\mu-r)^{2}+s_{\rho}^{2}\sigma^{2}\right)R_{x}^{2}}{2\sigma^{2}R_{xx}}+\frac{s_{\rho}R_{x}\left|R_{x}\right|}{\sigma R_{xx}}+rxR_{x}+\frac{\gamma}{1-\gamma}\left(R_{x}\right)^{\frac{\gamma-1}{\gamma}},\\
R(T,x) & =\epsilon^{\gamma}\frac{x^{1-\gamma}}{1-\gamma},
\end{align*}
and choose the ansatz $R(t,x)=f(t)^{\gamma}\frac{x^{1-\gamma}}{1-\gamma}$.
In this case the partial derivatives are given by
\begin{align*}
R_{t} & =\left(\gamma f(t)^{\gamma-1}f^{\prime}(t)\right)\frac{x^{1-\gamma}}{1-\gamma},\\
R_{x} & =f(t)^{\gamma}x^{-\gamma},\\
R_{xx} & =-\gamma f(t)^{\gamma}x^{-\gamma-1}.
\end{align*}
The terminal condition for our Merton problem is $R(T,x)=\epsilon^{\gamma}\frac{x^{1-\gamma}}{1-\gamma}$
hence $f(T)=\epsilon>0$. Setting $C_{1}:=-\frac{\left((\mu-r)^{2}+s_{\rho}^{2}\sigma^{2}\right)}{2\sigma^{2}}$
and $C_{2}:=\frac{s_{\rho}}{\sigma}$ for ease of notation we substitute
the derivatives in the PDE~\eqref{eq:HJB-2} and obtain the following
ordinary differential equation for $f$; 
\begin{align}
f^{\prime}(t) & =f(t)\left(-r\frac{1-\gamma}{\gamma}+\frac{1-\gamma}{\gamma^{2}}\left(C_{1}+C_{2}f^{\gamma}\right)\right)-1.\label{eq:ODE1}
\end{align}
For $\nu$ as defined in Theorem~\ref{thm:Merton}, the general solution
of the ordinary differential equation~\eqref{eq:ODE1} is 
\[
f(t)=\frac{1+(\nu\epsilon-1)e^{-\nu(T-t)}}{\nu},
\]
which is positive. The optimal value function thus is
\begin{align*}
R(t,x) & =\left(\frac{1+(\nu\epsilon-1)e^{-\nu(T-t)}}{\nu}\right)^{\gamma}\frac{x^{1-\gamma}}{1-\gamma}.
\end{align*}
It follows that the the optimal control is $\pi_{t}^{*}=\max\left(\frac{(\mu-r)-s_{\rho}\sigma}{\sigma^{2}\gamma},0\right)$,
where the optimal consumption process is $c_{t}^{*}=\frac{x\nu}{1+(\nu\epsilon-1)e^{-\nu(T-t)}},$
which concludes the proof.
\end{proof}
The following Figure~\ref{fig:OC} illustrates the optimal consumption
$c^{*}$ as a function of the risk level $s_{\rho}$ for $\gamma=0.4$,
$r=0.01$, $\mu=0.1$, $\sigma=0.3$ and $\epsilon=0.1$. The time
horizon is $T=4$ and we consider the wealth $w_{0}=1$. Note that
$s_{\rho}$ can take only values smaller than $\frac{\mu-r}{\sigma}$
as otherwise $\pi^{*}<0$. 
\begin{figure}[H]
\centering{}\includegraphics[viewport=0bp 0bp 461bp 307bp,clip,scale=0.6]{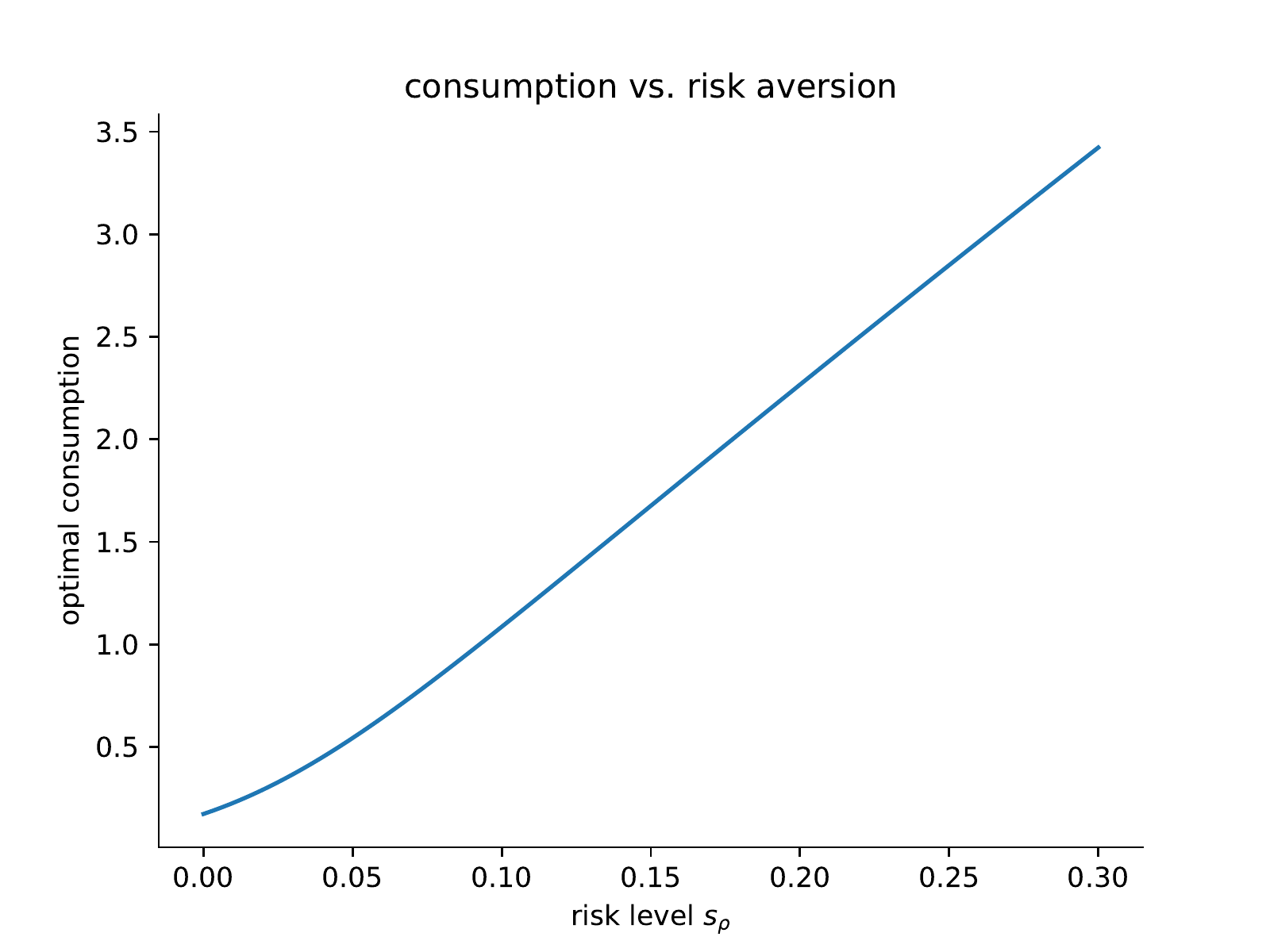}\caption{\label{fig:OC}optimal consumption}
\end{figure}

\section{Summary}

This paper introduces risk aversion in classical models of finance
by introducing nested risk measures. We demonstrate that classical
formulae, which are of outstanding importance in economics, are explicitly
available in the risk-averse setting as well. This includes the binomial
option pricing model, the Black\textendash Scholes model as well as
Merton's optimal consumption problem. 

We give an explicit Z-spread, which reflects the degree of risk aversion.
The Z-spread involves the volatility of the risky asset and a constant,
which indicates risk aversion. The results thus provide an economic
interpretation of the Z-spread by thorough risk management by iterating
risk measures.

We extend nested risk measures from a discrete time to a continuous
time setting. This allows deriving a non-linear risk generator expressing
the momentary dynamics of the classical model under risk aversion.
We demonstrate that the risk generator has a unique structure for
all coherent risk measures up to the constant, the coefficient of
risk aversion. The risk aversion constant is also naturally associated
with the Sharpe ratio. 

\section*{Acknowledgment}

We thank two anonymous referees for many insightful comments and helpful
discussions on this topic.

\bibliographystyle{abbrvnat}
\bibliography{LiteraturBook,LiteraturPaper}

\appendix

\section{A sufficient condition for Hölder continuity}

We give a sufficient condition for the Hölder continuity property~\eqref{eq:hoelder-continuity}
in Proposition~\ref{prop:RGen}. 
\begin{prop}
\label{prop:bla}Let $X$ be a solution of the stochastic differential
equation~\eqref{eq:SDE} where the drift $b$ and the diffusion $\sigma$
satisfy the \emph{usual conditions} of \citet[Theorem~5.2.1]{Oksendal2003}.
Furthermore suppose that for a fixed $p>2$ the moments $\E X_{t}^{p}$
are finite for every $t\in\mathcal{T}$ and that the diffusion coefficient
satisfies 
\[
\left|\sigma(t,x)-\sigma(s,x)\right|\leq\widetilde{D}\left|t-s\right|^{\alpha},\hfill\text{for all }x\in\mathbb{R}
\]
for some $\gamma\in(0,\nicefrac{1}{2})$. Then the Assumption~\eqref{eq:hoelder-continuity}
is satisfied, i.e., $\E C^{p}<\infty$ and in particular there exists
a constant such that
\[
\E C^{p}<C(\alpha,p,\mathcal{T},\widetilde{D}).
\]
\end{prop}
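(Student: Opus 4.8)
The plan is to reduce the statement to a quantitative version of the Kolmogorov--Chentsov continuity theorem applied to the diffusion $X$. For $s,t\in\mathcal{T}$ I would split
\[
\sigma(t,X_t)-\sigma(s,X_s)=\bigl(\sigma(t,X_t)-\sigma(s,X_t)\bigr)+\bigl(\sigma(s,X_t)-\sigma(s,X_s)\bigr).
\]
The first summand is bounded by $\widetilde{D}\,\lvert t-s\rvert^{\alpha}$ by the assumed temporal Hölder continuity of $\sigma$. The second summand is bounded by $L\,\lvert X_t-X_s\rvert$, where $L$ is the spatial Lipschitz constant of $\sigma$ supplied by the \emph{usual conditions} of \citet[Theorem~5.2.1]{Oksendal2003}. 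Hence it suffices to produce a finite random variable $C'\ge 0$ with $\E (C')^{p}<\infty$ and an exponent $\beta>0$ such that $\lvert X_t-X_s\rvert\le C'\,\lvert t-s\rvert^{\beta}$ almost surely for all $s,t\in\mathcal{T}$; one may then take $C:=\widetilde{D}\,(\operatorname{diam}\mathcal{T})^{(\alpha-\beta)_{+}}+L\,C'$ with Hölder exponent $\min(\alpha,\beta)$ in~\eqref{eq:hoelder-continuity}.

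The next step is the moment bound on increments of $X$. Writing the SDE~\eqref{eq:SDE} in integral form, $X_t-X_s=\int_s^t b(u,X_u)\,\mathrm{d}u+\int_s^t\sigma(u,X_u)\,\mathrm{d}W_u$, and applying Minkowski's integral inequality to the drift term and the Burkholder--Davis--Gundy inequality together with Jensen's inequality to the stochastic term, while using the linear growth of $b$ and $\sigma$ and the uniform moment bound $\sup_{u\in\mathcal{T}}\E\lvert X_u\rvert^{p}<\infty$ (which follows from the standing moment hypothesis and the standard Gronwall a priori estimate for SDE solutions on the compact interval $\mathcal{T}$), one obtains a constant $K$, depending only on $p$, $\mathcal{T}$ and the growth constants of $b$ and $\sigma$, with
\[
\E\,\lvert X_t-X_s\rvert^{p}\le K\,\lvert t-s\rvert^{p/2},\qquad s,t\in\mathcal{T}.
\]
Here the hypothesis $p>2$ is essential: it makes the exponent $p/2=1+\varepsilon$ with $\varepsilon:=p/2-1>0$ strictly larger than $1$.

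Finally I would invoke the quantitative Kolmogorov continuity theorem --- equivalently the Garsia--Rodemich--Rumsey inequality --- which upgrades $\E\lvert X_t-X_s\rvert^{p}\le K\lvert t-s\rvert^{1+\varepsilon}$ to the existence of a finite random variable $C'$ with $\E (C')^{p}<\infty$ and $\lvert X_t-X_s\rvert\le C'\,\lvert t-s\rvert^{\beta}$ a.s.\ for all $s,t\in\mathcal{T}$, for every $\beta<\varepsilon/p=\tfrac12-\tfrac1p$; the bound on $\E (C')^{p}$ delivered by that theorem is explicit in $K$, $p$, $\beta$ and $\operatorname{diam}\mathcal{T}$, hence ultimately in $\alpha$, $p$, $\mathcal{T}$, $\widetilde{D}$ and the coefficients of~\eqref{eq:SDE}. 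Combining with the first paragraph gives~\eqref{eq:hoelder-continuity} and the announced bound on $\E C^{p}$; if in addition $\Phi_x$ is bounded and Lipschitz in $(t,x)$, the same two-term splitting applied to $\sigma\,\Phi_x$ yields the hypothesis of Proposition~\ref{prop:RGen} for $\sigma\,\Phi_x$ itself. The main obstacle is precisely this last step --- passing from moment bounds on increments to an \emph{integrable} Hölder constant, for which the Garsia--Rodemich--Rumsey inequality (or a careful dyadic chaining argument carried out in $L^{p}$) is the appropriate tool; a secondary technical point is securing the uniform-in-time bound $\sup_{u\in\mathcal{T}}\E\lvert X_u\rvert^{p}<\infty$ from the stated hypotheses.
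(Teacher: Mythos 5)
Your proposal follows essentially the same route as the paper's proof: the same splitting of $\sigma(t,X_t)-\sigma(s,X_s)$ into a temporal H\"older part and a spatially Lipschitz part, the same increment moment bound $\E\left|X_t-X_s\right|^{p}\leq K\left|t-s\right|^{p/2}$ obtained via Jensen, Burkholder--Davis--Gundy and Gronwall, and the same upgrade to a $p$-integrable H\"older constant via the Kolmogorov continuity theorem combined with the Garsia--Rodemich--Rumsey inequality. The only (minor) difference is that you make explicit the additional regularity of $\Phi_x$ needed to pass from $\sigma$ to $\sigma\,\Phi_x$ in~\eqref{eq:hoelder-continuity}, a step the paper's proof leaves implicit.
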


\begin{proof}
First observe that the usual conditions of \citet[Theorem~5.2.1]{Oksendal2003}
as well as the assumption in Proposition~\ref{prop:bla} ensures
that there exist $D,\widetilde{D}\in\mathbb{R}$ such that
\begin{align*}
\left|\sigma(t,X_{t})-\sigma(s,X_{s})\right| & \leq\left|\sigma(t,X_{t})-\sigma(t,X_{s})\right|+\left|\sigma(t,X_{s})-\sigma(s,X_{s})\right|\\
 & \leq D\left|X_{t}-X_{s}\right|+\widetilde{D}\left|t-s\right|.
\end{align*}
Therefore consider Hölder bounds on$\left|X_{t}-X_{s}\right|$, let
$p>2$ and recall the estimate
\[
(a+b)^{p}\leq2^{p}(a^{p}+b^{p}),\qquad a,b\geq0
\]
which implies 
\[
|X_{u}|^{p}\leq2^{p}\left|\int_{0}^{u}b(v,X_{v})\,\mathrm{d}v\right|^{p}+2^{p}\left|\int_{0}^{u}\sigma(v,X_{v})\,\mathrm{d}W_{v}\right|^{p}.
\]
Estimate the terms separately and consider the first term. Jensen's
inequality applied to the probability measure $\frac{\mathrm{d}v}{u}$
shows
\begin{align*}
\left|\int_{0}^{u}b(v,X_{v})\,\mathrm{d}v\right|^{p} & =u^{p}\left|\int_{0}^{u}b(v,X_{v})\,\frac{\mathrm{d}v}{u}\right|^{p}\\
 & \leq u^{p-1}\int_{0}^{u}|b(v,X_{v})|^{p}\,\mathrm{d}v\\
 & \leq C^{p}u^{p-1}\int_{0}^{u}(1+|X_{v}|)^{p}\,\mathrm{d}v\\
 & \leq2^{p}C^{p}u^{p-1}(u+\int_{0}^{u}|X_{v}|^{p}\,\mathrm{d}v).
\end{align*}
To estimate the second term use the Burkholder-Davis-Gundy inequality
\begin{align*}
\E\left|\int_{0}^{u}\sigma(v,X_{v})\,\mathrm{d}W_{v}\right|^{p} & \leq c(p)\E\left(\int_{0}^{u}|\sigma(v,X_{v})|^{2}\,\mathrm{d}v\right)^{\nicefrac{p}{2}}\\
 & \leq c(p)\E\left[u^{\nicefrac{p}{2}-1}\int_{0}^{u}\left|\sigma(v,X_{v})\right|^{p}\,\mathrm{d}v\right]\\
 & \leq c(p)C^{p}u^{\nicefrac{p}{2}-1}\E\int_{0}^{u}(1+\left|X_{v}\right|)^{p}\,\mathrm{d}v\\
 & \leq c(p)(2C)^{p}u^{\nicefrac{p}{2}-1}(u+\int_{0}^{u}\E\left|X_{v}\right|^{p}\,\mathrm{d}v).
\end{align*}
An application of Gronwall's lemma for both terms provides upper
bounds
\[
\E\left|\int_{s}^{t}b(v,X_{v})\,\mathrm{d}v\right|^{p}\leq C_{\text{Gronwall}}(C,p,t)(t-s)^{p}
\]
and 
\[
\E\left|\int_{s}^{t}\sigma(v,X_{v})\,\mathrm{d}W_{v}\right|^{p}\leq\widetilde{C}_{\text{Gronwall}}(C,p,t)(t-s)^{\frac{p}{2}}.
\]
It follows by adding up and choosing an appropriate constant $C^{*}$
that for $|t-s|<1$
\[
\E\left|X_{t}-X_{s}\right|^{p}\leq C^{*}(t-s)^{\frac{p}{2}}.
\]
As $p>2$ we can identify $\frac{p}{2}=1+\beta$ for some $\beta>0$
which shows that the assumptions of Kolmogorovs continuity theorem
(cf. Theorem 21.6 in \citet{Klenke2014}) are satisfied and implying
that there exists a random $C_{r}>0$ such that 
\begin{equation}
\left|X_{t}-X_{s}\right|\leq C_{r}\left|t-s\right|^{\alpha}\label{eq:1}
\end{equation}
for $\alpha\in(0,\frac{\beta}{p})$. 

It remains to show that the $p$-norm of $C_{r}$ in~\eqref{eq:1}
can be bounded. To show this, recall the Garsia\textendash Rodemich\textendash Rumsey
inequality. For any $p>1$ and $\delta>\frac{1}{p}$, there is a constant
$C(\delta,p)\in\mathbb{R}$ such that for any $f\in C([0,T],\mathbb{R})$
and $t,s\in[0,T]$ 
\[
\left|f(t)-f(s)\right|^{p}\le C(\delta,p)|t-s|^{\delta p-1}\int_{s}^{t}\int_{s}^{t}\frac{\left|f(u)-f(v)\right|^{p}}{|u-v|^{\delta p+1}}\,\mathrm{d}u\,\mathrm{d}v.
\]
It suffices to consider $C_{r}:=||X||_{\alpha;[0,T]}$, the Hölder
norm of $X$ defined by
\[
||X||_{\alpha;[0,T]}:=\sup_{0\le t<s\le T}\frac{\left|X_{t}-X_{s}\right|}{|t-s|^{\alpha}}.
\]
For any $0<\alpha<\frac{\beta}{p}$ take $\delta\in\left(\frac{1}{p},\alpha+\frac{1}{p}\right)$
to get 
\begin{align*}
\E\left[||X||_{\alpha;[0,T]}^{p}\right] & \le C(\delta,p)\int_{0}^{T}\int_{0}^{T}\frac{\E\left[\left|X_{t}-X_{s}\right|^{p}\right]}{|u-v|^{\delta p+1}}\,\mathrm{d}u\,\mathrm{d}v\\
 & \leq C(\delta,p)C^{*}\int_{0}^{T}\int_{0}^{T}\frac{\left|t-s\right|^{1+\beta}}{|u-v|^{\delta p+1}}\,\mathrm{d}u\,\mathrm{d}v\\
 & =C(\delta,p)C^{*}\int_{0}^{T}\int_{0}^{T}|u-v|^{\beta-\delta p}\,\mathrm{d}u\,\mathrm{d}v\\
 & =C(\delta,p,T)C^{*}.
\end{align*}
Here the second inequality follows from the first step. We conclude
the assertion by observing that 
\begin{align*}
\left|\sigma(t,X_{t})-\sigma(s,X_{s})\right| & \leq D\left|X_{t}-X_{s}\right|+\left|\sigma(t,X_{s})-\sigma(s,X_{s})\right|\\
 & \leq DC_{r}\left|t-s\right|^{\alpha}+\widetilde{D}\left|t-s\right|^{\alpha}\\
 & =(DC_{r}+\widetilde{D})\left|t-s\right|^{\alpha},
\end{align*}
where the constant $DC_{r}+\widetilde{D}$ is $p$-integrable.
\end{proof}

\end{document}